\newtheorem{proposition}{Proposition}
\newtheorem{lemma}{Lemma}
\newtheorem{definition}{Definition}
\newtheorem{corollary}{Corollary}
\newcommand{\mb}[1]{\mathbf{#1}}
\newcommand{\mc}[1]{\mathcal{#1}}
\newcommand{\mbb}[1]{\mathbb{#1}}
\newcommand\gfrac[2]{\genfrac{}{}{0pt}{}{#1}{#2}}
\newcommand{\wt}[1]{\widetilde{#1}}
\newcommand{\overbar}[1]{\mkern 2mu\overline{\mkern-2mu#1\mkern-2mu}\mkern 2mu}
\begin{document}

\title{Outage Capacity of Rayleigh Product Channels: a Free Probability Approach}

\date{} 

\author{Zhong~Zheng,~\IEEEmembership{Member,~IEEE,}
        Lu~Wei,~\IEEEmembership{Member,~IEEE,}
        Roland~Speicher,\\
        Ralf~M\"uller,~\IEEEmembership{Senior Member,~IEEE,}
        Jyri~H\"am\"al\"ainen,~\IEEEmembership{Member,~IEEE,}
        and~Jukka~Corander
\thanks{Z. Zheng and J. H\"am\"al\"ainen are with the Department
of Communications and Networking, Aalto University, Finland (e-mails: \{zhong.zheng,~jyri.hamalainen\}@aalto.fi). L. Wei and J. Corander are with the Department of Mathematics and Statistics, University of Helsinki, Finland (e-mails: \{lu.wei,~jukka.corander\}@helsinki.fi). R. Speicher is with the Faculty of Mathematics, Saarland University, Germany (e-mail: speicher@math.uni-sb.de). R. M\"uller is with the Institute for Digital Communications, Friedrich-Alexander-Universit\"at Erlangen-N\"urnberg, Germany (e-mail: mueller@lnt.de). This work was presented in part at 2014 International Zurich Seminar on Communications.
}
}

\markboth{Submitted to IEEE Transaction on Information Theory}%
{Zheng \MakeLowercase{\textit{et al.}}: Outage Capacity of Rayleigh Product Channels: a Free Probability Approach}

\maketitle

\begin{abstract}

The Rayleigh product channel model is useful in capturing the performance degradation due to rank deficiency of MIMO channels. In this paper, such a performance degradation is investigated via the channel outage probability assuming slowly varying channel with delay-constrained decoding. Using techniques of free probability theory, the asymptotic variance of channel capacity is derived when the dimensions of the channel matrices approach infinity. In this asymptotic regime, the channel capacity is rigorously proven to be Gaussian distributed. Using the obtained results, a fundamental tradeoff between multiplexing gain and diversity gain of Rayleigh product channels can be characterized by closed-form expression at any finite signal-to-noise ratio. Numerical results are provided to compare the relative outage performance between Rayleigh product channels and conventional Rayleigh MIMO channels.

\end{abstract}

\begin{IEEEkeywords}
Central limit theorem; finite-SNR diversity-multiplexing tradeoff; free probability theory; MIMO; outage capacity; Rayleigh product channels.
\end{IEEEkeywords}

\IEEEpeerreviewmaketitle

\section{Introduction}

Multi-Input Multi-Output~(MIMO) wireless communications have received considerable attention since it is seen as the most promising way to increase link level capacity. Extensive works have focused on the performance of MIMO channels assuming a rich scattering environment. Therein, the presumed models include full-rank independent Rayleigh or Rician MIMO channels. However, in certain environments the propagation may be subject to structural limits of fading channels caused by either insufficient scattering~\cite{GesbertTCOM2002, MullerTIT2002} or the so-called keyhole effect~\cite{ChizhikTWC2002}. These channels exhibit rank deficiency compared to the independent Rayleigh and Rician models. The MIMO model that captures these effects is referred to as the double-scattering channel~\cite{GesbertTCOM2002}. It is characterized by a matrix product involving three deterministic matrices (i.e., transmit, receiver, and scatterer correlation matrices), and two statistically independent complex Gaussian matrices. In a typical office environment, empirical measurements have been used to demonstrate the validity of the double-scattering channel model~\cite{Mul01}.

There exist a number of studies concerning the information-theoretic quantities of the double-scattering channels. \mbox{Shin \emph{et. al.}} derived an upper bound for the ergodic capacity~\cite[Th. III.3]{SL03} and an exact expression for a single keyhole channel~\cite[Th. III.4]{SL03}. The diversity-multiplexing tradeoff of the double-scattering channel was obtained in~\cite{Yang11}. The authors in~\cite{LL11} investigated the asymptotic Rayleigh-limit when one of the matrix dimensions approaches infinity. In this case, the double-scattering model reduces to an equivalent Rayleigh MIMO channel. Furthermore, if all matrix dimensions are large, the ergodic capacity has been obtained in \cite{MullerTIT2002} via numerical integration. Recently, an asymptotic expression for ergodic capacity of the double-scattering channels was derived in \cite{HoydisAsilomar2011}. Moreover, authors in \cite{WZTH13,AKW13,AJK13} derived the ergodic mutual information for finite dimensional channel matrices. However, all the above results are valid for ergodic channels, where each codeword has infinite length. For many practical communication systems such as WLANs~\cite{wlan}, the channels, albeit random, are slowly varying and the encoding/decoding process is subject to a delay constraint with moderate target packet error rates around $10^{-2}$--$10^{-1}$. The fading channel seen by each codeword are therefore non-ergodic. In this case, the ergodic capacity has no physical significance, whereas the outage capacity is a more relevant performance metric~\cite{OzarowTVT1994}. In literature, the outage capacity has been studied for conventional Rayleigh MIMO channels~\cite{MSS03,HochwaldTIT2004,KamathTIT2005,Tulino05,DM05,KazakopoulosOutCap2011,ChenTIT2012} as well as for Rician MIMO channels~\cite{SimonOutCap2006,KangOutCap2006,RieglerOutCap2010} via various random matrix techniques.

To the best of our knowledge, the outage capacity for the double-scattering channel has not been addressed in the most general form\footnote{Note that authors in~\cite{DM05} derived the outage probability of a similar MIMO model with random steering matrices at antenna arrays. However, these steering matrices are slowly varying compared to multi-path fading and considered as deterministic. Thus, the tools in~\cite{DM05} are not applicable here.}. It turns out to be a difficult random matrix theory problem. To gain insights into the outage behavior of the double-scattering channel, we consider a simplified channel model involving a product of two statistically independent complex Gaussian matrices, also known as the Rayleigh product channel. This channel model corresponds to the scenario where the antenna elements as well as the scattering objects are sufficiently separated and there is no spatial correlation at antenna arrays or between scatterers. To characterize the capacity fluctuations, we use the free probability theory for large dimensional random matrices~\cite{voiculescu86,SpeicherICM14,CMSS07}. By utilizing the second order Cauchy transform and \mbox{$R$-transform} machinery, we derive a compact expression for the asymptotic variance of the capacity of the Rayleigh product channel. We further show that the channel capacity distribution is asymptotically Gaussian by proving a Central Limit Theorem~(CLT) for the Linear Spectral Statistics (LSS) of the Rayleigh product ensemble. This result generalizes the CLT for correlated Wishart random matrices~\cite{Bai2004} and the CLT for Rayleigh product ensembles from polynomial LSS to generic analytic functions~\cite{Breuer2013}. The capacity distribution is then utilized to study the corresponding finite Signal-to-Noise-Ratio~(SNR) Diversity-Multiplexing Tradeoff~(DMT). The derived results in this paper are formally valid when the dimensions of the channel matrices grow to infinity. However, numerical simulations show that they serve as good approximations when the numbers of antennas and scatterers are comparable to practical systems.

The rest of the paper is organized as follows. In Section~\ref{secModel}, we give the channel model, the signal model as well as the MIMO capacity formulation. In Section~\ref{secMoment}, we study the second order eigenvalue fluctuations and the asymptotic capacity variance. The second order Cauchy transform of Rayleigh product ensembles is derived in Section~\ref{secCumu}. The CLT of the capacity of Rayleigh product channels is proved in Section~\ref{secCLT}. Based on this result, the approximations for outage probability and the finite-SNR DMT are calculated. In Section~\ref{secConc}, we conclude the main findings of the paper. Proofs of the technical results are provided in the Appendices.

\emph{Notations.} Throughout the paper, vectors are represented by lower-case bold-face letters, and matrices are represented by upper-case bold-face letters. The complex vector field with length $n$ is denoted as $\mbb{C}^n$. We use $\mc{CN}(0,\mb{A})$ to denote the zero-mean complex Gaussian vector with covariance matrix $\mb{A}$ and $\mb{I}_n$ is an $n\times n$ identity matrix. The superscript $(\cdot)^{\dag}$ denotes the matrix conjugate-transpose operation and $(\cdot)^{\mathrm{T}}$ is matrix transpose. We denote $\overbar{(\cdot)}$ as the complex conjugate operator. Denote $\mathrm{Tr}(\mb{A})$ as the trace of $n\times n$ matrix $\mb{A}$ and $\mathrm{tr}(\mb{A})$ as the normalized trace $\mathrm{Tr}(\mb{A})/n$. The notation $\mbb{E}[\cdot]$ denotes the expectation, and $\det(\cdot)$ denotes the matrix determinant.

\section{Rayleigh Product MIMO Channels}\label{secModel}

\subsection{Channel Model}

Consider a discrete-time, baseband MIMO system with $T$ transmit and $R$ receive antennas. The channel is assumed to follow the Rayleigh product fading with $S$ scattering objects, as shown in Fig.~\ref{figDoubleScatter}. The channels between the $s$-th scatterer and transmit antennas are denoted by vector $\boldsymbol\theta_s = [\theta_{s1},\ldots,\theta_{sT}]$, and the channels between receive antennas and the $s$-th scatterer are denoted by vector $\boldsymbol\psi_{s} = [\psi_{s1},\ldots,\psi_{sR}]$. The end-to-end equivalent channel matrix $\mb{H}$ is given by
\begin{equation}
\mb{H} = \frac{1}{\sqrt{R S}}\sum_{s=1}^S \boldsymbol\psi_s^{\dag}\boldsymbol\theta_s = \frac{1}{\sqrt{R S}}\mb{\Psi}^\dag\mb{\Theta},\label{eqH}
\end{equation}
where $\mb{\Theta} = \left[\boldsymbol\theta_1^{\dag},\ldots,\boldsymbol\theta_S^{\dag}\right]^{\dag}$ and $\mb{\Psi} = \left[\boldsymbol\psi_1^{\dag},\ldots,\boldsymbol\psi_S^{\dag}\right]^{\dag}$. We assume $\boldsymbol\theta_s\sim\mc{CN}(0,\mb{I}_T)$ and $\boldsymbol\psi_s\sim\mc{CN}(0,\mb{I}_R)$, where $\boldsymbol\theta_i$ and $\boldsymbol\psi_j$, $1\le i,j\le S$, are statistically independent. The channel $\mb{H}$ is thus modeled as a product of two independent complex Gaussian random matrices. In line with~\cite{MullerTIT2002,HoydisAsilomar2011,WZTH13,AKW13,AJK13}, the channel $\mb{H}$ is normalized by the constant $1/\sqrt{R S}$ so that the total energy of the channel is equal to an AWGN channel with an array gain $\mbb{E}[\mathrm{Tr}(\mb{H}\mb{H}^{\dag})] = \sum_{i,j} \mbb{E}[|H_{ij}|^2] = T$.

\begin{figure}[!t]
\centering
\includegraphics[width=4.5in]{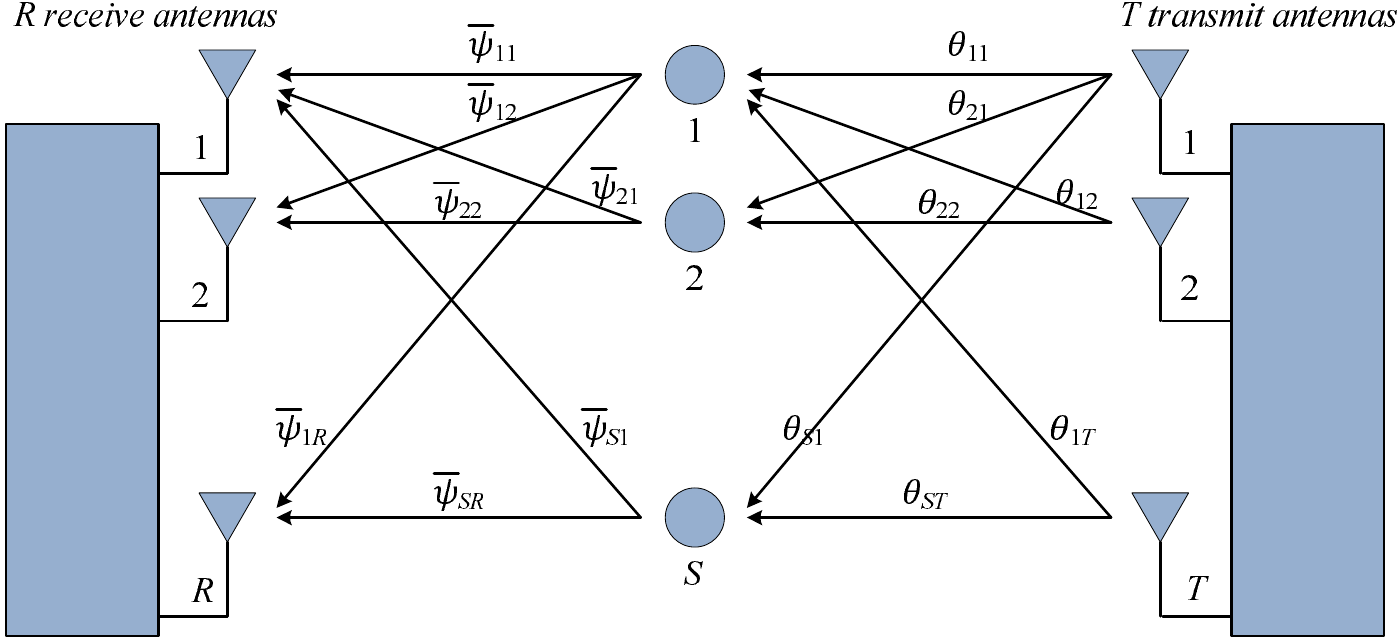}
\caption{MIMO communications over the Rayleigh product channel with $T$ transmit antennas, $R$ receive antennas, and $S$ scatterers.} \label{figDoubleScatter}
\end{figure}

The presence of independent Gaussian matrices $\mb{\Theta}$ and $\mb{\Psi}$ in (\ref{eqH}) requires two independent and richly-scattered environments, where the scattering happens between the $S$ scatterers/keyholes and transmit and receive arrays, respectively. This requires the existence of a large number of independently reflected and scattered paths around the antenna arrays~\cite{Tse2005}. The two environments are connected only via the $S$ scatterers/keyholes. By controlling the number $S$, the Rayleigh product channel (\ref{eqH}) embraces a general family of MIMO fading channel, spanning from the degenerate keyhole channel $S=1$ \cite{ChizhikTWC2002} to the full-rank Rayleigh MIMO channel $S\rightarrow\infty$ with fixed $R$ and $T$ \cite{LL11}. M\"uller and Hofstetter~\cite{Mul01} have shown that the number of significant scatterers is around ten in a typical office building with an $8\times 8$ antenna configuration. Measurement results in~\cite{AlmersTWC2006} indicate that the effective rank of a $6\times 6$ keyhole channel depends on the sizes of scatterer/keyhole at different transmission frequencies. In general, the number of separable scattering objects depends on the number of antenna elements since a larger array increases the spatial resolution~\cite{MullerTIT2002}. Note that the model (\ref{eqH}) also describes the MIMO relay channels when assuming noiseless relays~\cite{relay}.

\subsection{Signal Model and Channel Capacity}

The channel output vector $\mb{y}\in\mbb{C}^R$, at a given time instance, equals
\begin{equation}\label{eqY}
\mb{y} = \mb{H}\mb{x}+\mb{n},
\end{equation}
where $\mb{x}\in\mbb{C}^T$ is the transmit vector that follows the complex Gaussian distribution $\mb{x}\sim\mathcal{CN}(\mb{0},\mb{\Sigma})$ with $\mb{\Sigma}=\mbb{E}[\mb{x}\mb{x}^\dag]$. The additive noise $\mb{n}\in\mbb{C}^R$ is modeled as an \emph{i.i.d.} complex Gaussian vector $\mb{n}\sim\mc{CN}(\mb{0},\mb{I}_R)$. In this paper, we have adopted the following assumptions:
\begin{enumerate}[{A}1)]
\item The Channel State Information~(CSI) is perfectly known at the receiver but not at the transmitter.\label{a1}
\item The channel $\mb{H}$ is frequency flat and quasi-static. It remains constant for certain symbol durations and takes independently a new value for each coherence time.\label{a2}
\item Delay-constrained encoding/decoding. The encoded transmit message has a finite block length and spread in time over no more than a maximum allowable decoding delay. We assume the length of a coding block is equal to one independently faded interval.\label{a3}
\end{enumerate}

Under~A\ref{a1}, Telatar~\cite{Telatar1999} has shown that the channel capacity is achieved when the transmitted symbols are independent across antennas and the power is equally allocated, i.e. $\mb{\Sigma} = \gamma\mb{I}_T$ and $\gamma$ denotes the Signal to Noise Ratio~(SNR) per received antenna. The instantaneous capacity of the MIMO channel (\ref{eqY}) in nats/sec/Hz is given by
\begin{equation*}
\mc{I} =\log\det\left(\mb{I}_R+\gamma\mb{H}\mb{H}^{\dag}\right)=\sum_{i=1}^R \log(1+\gamma\lambda_i),
\end{equation*}
where $\lambda_i$, $i=1,\ldots,R$, refer to the eigenvalues of $\mb{Q} = \mb{H}\mb{H}^{\dag}$. For the Hermitian matrix $\mb{Q}$, we find it convenient to introduce the Empirical Spectral Distribution~(ESD) defined as
\begin{equation*}
\widetilde{F_\mb{Q}}(\lambda) = \frac{1}{R}\sum_{i=1}^R \mathbf{1}(\lambda_i\le \lambda),
\end{equation*}
where $\mathbf{1}(\cdot)$ denotes the indicator function. By letting $\varphi(x)=\log\left(1+\gamma x\right)$, the channel capacity $\mc{I}$ can be rewritten in terms of $\widetilde{F_{\mb{Q}}}(\lambda)$ as
\begin{equation}
\mc{I}=R\int \varphi(\lambda)\,\mathrm{d}\widetilde{F_\mb{Q}}(\lambda).\label{eqMI}
\end{equation}
As the channel matrix $\mb{H}$ is random, the instantaneous capacity~(\ref{eqMI}) is also a random variable. Without CSI at the transmitter, there is a non-zero probability, independent of the code length, that the channel capacity~(\ref{eqMI}) falls below any positive rate. Due to the assumptions A\ref{a2} and A\ref{a3}, the error probability corresponding to this rate cannot be decreased exponentially with the code length~\cite{OzarowTVT1994}. In this case, no reliable transmission is possible and the performance cannot be evaluated using the ergodic capacity. Instead, the fundamental performance limit of such a system is best explained with the capacity versus outage tradeoff, characterized by the Cumulative Distribution Function~(CDF) of $\mc{I}$. Given a fixed rate $r$, the outage probability is defined as the probability that capacity $\mc{I}$ is less than $r$, i.e.
\begin{equation}\label{eqPout}
P_{\mathrm{out}}(r) = \mathrm{Pr}\{\mc{I}\le r\} = F_{\mc{I}}(r),
\end{equation}
where $F_{\mc{I}}(\cdot)$ denotes the CDF of $\mc{I}$. When the CDF $F_{\mc{I}}(\cdot)$ is monotonically increasing, the outage capacity for a given probability $P_{\mathrm{out}}$ is obtained as
\begin{equation*}\label{eqIout}
\mc{I}_{\mathrm{out}} = F_{\mc{I}}^{-1}(P_{\mathrm{out}}).
\end{equation*}

The outage probability (\ref{eqPout}) is achievable~\cite{PrasadTIT2006} in the sense that for any $\epsilon > 0$, there exists a code of sufficiently large block length for which the packet error rate is upper-bounded by $P_{\mathrm{out}}(r)+\epsilon$. Thus, outage capacity provides useful insights on the operational performance of a delay-constrained coded system. Outage probability is also a meaningful metric to characterize the performance of some contemporary communication systems~\cite{LozanoMag2012}, where timely CSI is available at the transmitters. From this viewpoint, the complementary outage probability $1-P_{\mathrm{out}}(r)$ can be interpreted as the percentage of time that a transmission takes place at given rate $r$ under perfect link adaptation.

\section{Statistics of Channel Capacity}\label{secMoment}

In this section, we first review the convergence of empirical eigenvalue distribution of the Hermitian matrix $\mb{Q}=\mb{H}\mb{H}^\dag$ when matrix dimensions grow to infinity. The capacity per receive antenna is shown to converge to a deterministic value and expressed by a known result in~\cite{HoydisAsilomar2011}. Then, we study the global fluctuation of eigenvalues around the limiting distribution by deriving a closed-form expression for the the second order Cauchy transform of~$\mb{Q}$. This result is utilized to obtain the asymptotic variance of the channel capacity.

\subsection{First Order Cauchy Transform and Asymptotic Capacity}

For an $N\times N$ Hermitian random matrix $\mb{A}$, we assume that its ESD $\widetilde{F_{\mb{A}}}(\cdot)$ converges to a non-random limiting distribution $F_{\mb{A}}(\cdot)$ as $N\rightarrow\infty$. Such a convergence is alternatively established in~\cite{Bai1993} via the  convergence of resolvent $\wt{\mc{G}_{\mb{A}}}(z)$ to the first order Cauchy transform\footnote{In what follows, we refer to the first order Cauchy transform simply as Cauchy transform unless otherwise stated.} $\mc{G}_{\mb{A}}(z)$, defined as
\begin{align}
\wt{\mc{G}_{\mb{A}}}(z) &= \mathrm{tr}\left(\mb{I}_N z - \mb{A}\right)^{-1} = \int\frac{1}{z-t}\mathrm{d}\wt{F_{\mb{A}}}(t),\nonumber\\
\mc{G}_{\mb{A}}(z) &= \int_{\mc{S}_{\mb{A}}} \frac{1}{z-t}\mathrm{d}F_{\mb{A}}(t).\label{eqResolvent}
\end{align}
Here, $z\in\mbb{C}^{+}=\{z: \mathrm{Im}(z)>0\}$ and $\mc{S}_{\mb{A}}$ denotes the support of $F_{\mb{A}}(\cdot)$. Due to this limiting behavior of eigenvalues, the normalized linear spectral statistics, such as normalized capacity $\mc{I}/R$, converges to a non-random limit as $N\rightarrow\infty$ for a wide class of matrix ensembles~\cite{MullerTIT2002,MSS03,RieglerOutCap2010}. 

In the following, the limit $\lim\limits_{R\rightarrow\infty}$ denotes the asymptotic regime,
\begin{equation}\label{eqDim}
T,\ S,\ R\rightarrow\infty,\quad\mbox{with}\quad\rho = \frac{S}{R}\quad \mbox{and}\quad \zeta = \frac{T}{S}\quad \mbox{fixed}.
\end{equation}
In the asymptotic regime (\ref{eqDim}), Silverstein~\cite{Silverstein95} shows that the ESD $\wt{F_{\mb{Q}}}(\cdot)$ converges almost surely to a non-random CDF $F_{\mb{Q}}(\cdot)$  and its Cauchy transform $\mc{G}_{\mb{Q}}$ is the solution to
\begin{equation}
z = \frac{1}{\mc{G}_{\mb{Q}}} + \rho\int_{\mc{S}_{\mb{P}}}\frac{\lambda\mathrm{d}F_{\mb{P}}(\lambda)}{1-\lambda \mc{G}_{\mb{Q}}},\label{eqGQ}
\end{equation}
where $\mb{P} = \mb{\Theta}\mb{\Theta}^{\dag}/S$ and $F_{\mb{P}}(\cdot)$ is the well-known Mar\v{c}enko-Pastur distribution~\cite{MP67}. The integration range in (\ref{eqGQ}) is $\left[(1-\sqrt{\zeta})^2,\ (1+\sqrt{\zeta})^2\right]$. Using multiplicative free convolution, M\"uller~has shown in~\cite{MullerTIT2002} that $\wt{\mc{G}_{\mb{Q}}}(z)\rightarrow\mc{G}_{\mb{Q}}(z)$ in the asymptotic regime (\ref{eqDim}) and $\mc{G}_{\mb{Q}}$ satisfies the cubic equation
\begin{equation}
z^2 \mc{G}_\mb{Q}^3(z) + (\rho\zeta + \rho - 2) z \mc{G}_\mb{Q}^2(z)+\big((\rho\zeta-1)(\rho-1)-\rho z\big)\mc{G}_\mb{Q}(z) + \rho=0.\label{eqcubicGQ}
\end{equation}
If $|z|\rightarrow\infty$, $\mc{G}_{\mb{Q}}(\cdot)$ admits the formal power series expansion
\begin{equation}\label{eqGz}
\mc{G}_{\mb{Q}}(z) = \sum_{n=0}^{\infty} \alpha_n z^{-n-1},
\end{equation}
where $\alpha_0 = 1$ and $\alpha_n$ is the $n$-th free moment of $\mb{Q}$, defined as
\begin{equation*}
\alpha_n = \lim_{R\rightarrow\infty}\mbb{E}\left[\mathrm{tr}(\mb{Q}^n)\right].
\end{equation*}
A concept closely related to the Cauchy transform $\mc{G}_{\mb{Q}}(z)$ is the \mbox{$R$-transform} $\mc{R}(z)$, defined as a functional~\cite{voiculescu86}
\begin{equation}\label{eqGRfunc}
\mc{G}_{\mb{Q}}\left(\mc{R}(z)+\frac{1}{z}\right) = z.
\end{equation}
If $|z|\rightarrow\infty$, $\mc{R}(z)$ has the formal power series representation
\begin{equation}\label{eqRz}
\mc{R}(z) = \sum_{n=1}^{\infty} \kappa_n z^{n-1},
\end{equation}
where $\kappa_n$ is the $n$-th free cumulant of $\mb{Q}$. As will be shown in Section~\ref{secCumu}, the free cumulant sequence $\{\kappa_n\}_{n\ge 1}$ and its generating function~$\mc{R}(z)$ serve as the key analytical tools in the proof of Proposition~\ref{propCauchy2}. Note that when the matrix dimensions are finite, the computation of $\alpha_n$ involves non-trivial summations over all partitions of integer $n$~\cite[Eq. (27)]{LuOSTBC2014}. This complicated expression makes it challenging to obtain an explicit expression for the free cumulant $\kappa_n$. As the matrix dimensions approach infinity, the calculation of $\kappa_n$ is much simplified, involving only the so-called non-crossing permutations over integers, see Lemma~\ref{lemmaCumu} and Appendix~\ref{appxNC} for a detailed discussion. 


Using random matrix theory techniques, authors in~\cite{HoydisAsilomar2011} prove that the capacity per receive antenna of the Rayleigh product channel converges to an asymptotic limit such that $\lim\limits_{R\rightarrow\infty} \left(\mc{I}-\mu_{\mc{I}}\right)/R=0$. The asymptotic capacity $\mu_{\mc{I}}/R$ is given by an explicit closed-form expression, which is summarized in the following proposition.

\begin{proposition}(Asymptotic capacity \cite{HoydisAsilomar2011})\label{propMu}
When $R = T$, the asymptotic capacity per receive antenna~$\mu_\mc{I}/R$ in the regime (\ref{eqDim}) is given by
\begin{equation*}
\frac{\mu_{\mc{I}}}{R} = \log\left(\frac{1}{g}+\frac{\gamma}{\rho}(g+\rho-1)\right)-\rho\log\left(1+\frac{g-1}{\rho}\right)-2(1-g),
\end{equation*}
where $g$ is the unique solution to
\begin{equation*}
g^3-(1-\rho)g^2+\frac{\rho}{\gamma}(g-1) = 0
\end{equation*}
such that $(1-g)/(g(g+\rho-1))\ge 0$.
\end{proposition}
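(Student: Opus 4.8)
The plan is to identify $\mu_{\mc{I}}/R$ with the Shannon transform of the limiting spectral distribution $F_{\mb{Q}}$ and to evaluate it through the Cauchy transform $\mc{G}_{\mb{Q}}$ governed by the cubic (\ref{eqcubicGQ}). By the cited convergence $\lim_{R\to\infty}(\mc{I}-\mu_{\mc{I}})/R=0$ together with the almost-sure weak convergence $\wt{F_{\mb{Q}}}\to F_{\mb{Q}}$, and since $F_{\mb{Q}}$ has bounded support (a free multiplicative convolution of two compactly supported Mar\v{c}enko--Pastur laws), the integrand $\log(1+\gamma\lambda)$ is bounded and continuous on the support, so from (\ref{eqMI}) one obtains $\mc{V}(\gamma):=\mu_{\mc{I}}/R=\int\log(1+\gamma\lambda)\,\mathrm{d}F_{\mb{Q}}(\lambda)$ with $\mc{V}(0)=0$. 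I would treat $\mc{V}$ as a function of the SNR $\gamma$ and recover it by integrating its derivative.

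Differentiating under the integral gives $\mc{V}'(\gamma)=\int \frac{\lambda}{1+\gamma\lambda}\,\mathrm{d}F_{\mb{Q}}(\lambda)=\frac{1}{\gamma}\big(1-g\big)$, where I set $g:=\int \frac{\mathrm{d}F_{\mb{Q}}(\lambda)}{1+\gamma\lambda}$. The key observation is that $g$ is a value of the Cauchy transform: from the definition (\ref{eqResolvent}) one has $\mc{G}_{\mb{Q}}(-1/\gamma)=-\gamma\,g$, i.e. $g=-\mc{G}_{\mb{Q}}(-1/\gamma)/\gamma$. Substituting $z=-1/\gamma$ and $\mc{G}_{\mb{Q}}=-\gamma g$ into (\ref{eqcubicGQ}), and using that $R=T$ forces $\rho\zeta=T/R=1$, collapses the cubic to $g^3-(1-\rho)g^2+\frac{\rho}{\gamma}(g-1)=0$, which is exactly the defining equation for $g$ in the statement. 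Solving this for the SNR yields $\gamma=\frac{\rho(1-g)}{g^2(g+\rho-1)}$, and demanding $\gamma>0$ is precisely the sign condition $(1-g)/(g(g+\rho-1))\ge 0$, which singles out the physical branch analytically continued from $\mc{G}_{\mb{Q}}$.

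With $\gamma$ expressed explicitly through $g$, I would change the integration variable in $\mc{V}(\gamma)=\int_0^\gamma \frac{1-g(\gamma')}{\gamma'}\,\mathrm{d}\gamma'$ from $\gamma'$ to $g$, noting that $\gamma'\to 0$ corresponds to $g\to 1$. Writing $\mathrm{d}\gamma'/\gamma'=\mathrm{d}\log\gamma'$ and $\log\gamma'=\log\rho+\log(1-g)-2\log g-\log(g+\rho-1)$, the factor $(1-g')$ multiplies a sum of simple poles and the integrand telescopes into the elementary form $2-\frac{2}{g'}-\frac{\rho}{g'+\rho-1}$. Integrating from $1$ to $g$ gives $\mc{V}=-2(1-g)-2\log g-\rho\log\big(1+\frac{g-1}{\rho}\big)$. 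Finally, the cubic relation implies $\frac{1}{g}+\frac{\gamma}{\rho}(g+\rho-1)=\frac{1}{g^2}$, so that $-2\log g=\log\big(\frac{1}{g}+\frac{\gamma}{\rho}(g+\rho-1)\big)$, which recasts the result into the stated closed form.

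The main obstacle is not the closed-form integration, which is routine once the substitution is set up, but the correct selection and analysis of the root $g(\gamma)$. I would need to verify that the chosen branch of the cubic is the one analytically continued from $\mc{G}_{\mb{Q}}$ on $\mbb{C}^{+}$, that it is real, smooth, and strictly decreasing in $\gamma$ on $(0,\infty)$ with $g\to 1$ as $\gamma\to 0$, so that the change of variables is a valid bijection, and that the stated sign condition is equivalent to this selection. A secondary technical point is justifying the differentiation under the integral sign and the interchange of limit and integral; both follow from the bounded support of $F_{\mb{Q}}$ and dominated convergence, but should be recorded for rigor.
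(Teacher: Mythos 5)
Your derivation is correct, but it is worth noting that the paper does not actually prove Proposition~\ref{propMu} at all: it is imported verbatim from \cite{HoydisAsilomar2011}, where it is obtained via deterministic-equivalent techniques for the Stieltjes transform. What you have supplied is a self-contained alternative derivation starting from the cubic (\ref{eqcubicGQ}) that the paper does state. The mechanics check out: with $R=T$ one has $\rho\zeta=1$, and substituting $z=-1/\gamma$, $\mc{G}_{\mb{Q}}(-1/\gamma)=-\gamma g$ into (\ref{eqcubicGQ}) does collapse it to $g^3+(\rho-1)g^2+\tfrac{\rho}{\gamma}(g-1)=0$, which is the stated cubic; inverting gives $\gamma=\rho(1-g)/\bigl(g^2(g+\rho-1)\bigr)$, the logarithmic differential $\mathrm{d}\log\gamma$ produces exactly the integrand $2-2/g-\rho/(g+\rho-1)$, and the identity $\tfrac{1}{g}+\tfrac{\gamma}{\rho}(g+\rho-1)=1/g^2$ converts $-2\log g$ into the first term of the stated formula. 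This Shannon-transform route buys a derivation that uses only objects already present in the paper, at the cost of the technical verifications you correctly flag (analyticity of the chosen root, monotonicity of $g(\gamma)$, differentiation under the integral), all of which are routine given the compact support of $F_{\mb{Q}}$.

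Two small points to tighten. First, your claim that ``demanding $\gamma>0$ is precisely the sign condition'' is slightly off: $\gamma>0$ is equivalent to $(1-g)/(g+\rho-1)>0$ because the factor $g^2$ is automatically positive, whereas the stated condition $(1-g)/\bigl(g(g+\rho-1)\bigr)\ge 0$ additionally encodes the sign of $g$; the two coincide only once you observe that on the physical branch $g=\int(1+\gamma\lambda)^{-1}\mathrm{d}F_{\mb{Q}}(\lambda)\in(0,1]$. Second, when $\rho<1$ the limit law $F_{\mb{Q}}$ carries an atom at the origin of mass $1-\rho$; this is harmless (the atom contributes nothing to $\int\log(1+\gamma\lambda)\,\mathrm{d}F_{\mb{Q}}$ and contributes $1-\rho$ to $g$), but it is worth a sentence so the reader does not worry about the boundedness and continuity claims near $\lambda=0$.
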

Although the asymptotic capacity $\mu_{\mc{I}}$ grows to infinity in the asymptotic regime (\ref{eqDim}), it serves as a tight approximation to the mean capacity $\mbb{E}[\mc{I}]$ with finite matrix dimensions as shown in~\cite{HoydisAsilomar2011}. In the following, we will also use $\mu_{\mc{I}}$ as the approximated $\mbb{E}[\mc{I}]$ whenever it is clear from the context.

\subsection{Second Order Cauchy Transform and Asymptotic Variance}\label{secVar}

As $\wt{F_{\mb{Q}}}(\cdot)\rightarrow F_{\mb{Q}}(\cdot)$ in the asymptotic regime~(\ref{eqDim}), the asymptotic capacity per-received antenna $\mu_{\mc{I}}/R$ can be formulated by replacing $\wt{F_{\mb{Q}}}(\lambda)$ in (\ref{eqMI}) with $F_{\mb{Q}}(\lambda)$. Using an integral identity\footnote{The definition of Stieltjes transform in \cite{Bai2004} is different from the Cauchy transform by a minus sign.}~\cite[Eq. (1.14)]{Bai2004}, we utilize an amenable form of asymptotic capacity, which is useful in the following discussion, namely
\begin{equation}
\frac{\mu_{\mc{I}}}{R} = \int_{\mc{S}_{\mb{Q}}} \varphi(\lambda)\mathrm{d}F_{\mb{Q}}(\lambda) = \frac{1}{2\pi \imath}\oint_{\mc{C}}\varphi(z)\mc{G}_{\mb{Q}}(z)\mathrm{d}z.\label{eqContourInt}
\end{equation}
The complex integral on the right hand side of (\ref{eqContourInt}) is over any positively oriented closed contour $\mc{C}$ enclosing the support $\mc{S}_{\mb{Q}}$ and on which $\varphi(\cdot)$ is analytic. For the instantaneous channel capacity (\ref{eqMI}), there exists a similar integral expression as in (\ref{eqContourInt}) with Cauchy transform $\mc{G}_{\mb{Q}}(z)$ replaced with the resolvent $\wt{\mc{G}_{\mb{Q}}}(z)$. To see this, let the contour $\mc{C}$ be selected according to (\ref{eqContourInt}) and apply Cauchy's integral formula on $\varphi(\lambda)$, it follows that the instantaneous capacity~(\ref{eqMI}) becomes
\begin{align*}
\frac{\mc{I}}{R} = \int \varphi(\lambda)\,\mathrm{d}\widetilde{F_\mb{Q}}(\lambda)
= \frac{1}{2\pi \imath}\int \oint_{\mc{C}}\frac{\varphi(x)}{x-\lambda}\mathrm{d}x\,\mathrm{d}\widetilde{F_\mb{Q}}(\lambda).
\end{align*}
Exchange the integrations and recall the definition of resolvent (\ref{eqResolvent}), we obtain
\begin{equation}
\frac{\mc{I}}{R} = \frac{1}{2\pi \imath}\oint_{\mc{C}}\varphi(x)\wt{\mc{G}_{\mb{Q}}}(x)\mathrm{d}x.\label{eqMIb}
\end{equation}
Let us now consider the variance of capacity $\mc{I}$, defined as $\sigma_{\mc{I}}^2 = \mbb{E}[(\mc{I}-\mbb{E}[\mc{I}])^2]$. Replacing $\mc{I}$ with (\ref{eqMIb}) and $\mbb{E}[\mc{I}]$ with (\ref{eqContourInt}), the variance $\sigma_{\mc{I}}^2$ can be rewritten as
\begin{equation*}
\sigma_{\mc{I}}^2 = -\frac{1}{4\pi^2}\mbb{E}\left[\oint_{\mc{C}_x}\varphi(x)G_R(x)\mathrm{d}x\,\oint_{\mc{C}_y}\varphi(y)G_R(y)\mathrm{d}y\right],
\end{equation*}
where $G_R(x) = R\left(\wt{\mc{G}_{\mb{Q}}}(x)-\mc{G}_{\mb{Q}}(x)\right)$, the contours $\mc{C}_x$ and $\mc{C}_y$ are non-overlapping and are taken in the same way as in (\ref{eqContourInt}). After interchanging the expectation and integrations, we have
\begin{equation}
\sigma_{\mc{I}}^2 = -\frac{1}{4\pi^2} \oiint_{\mc{C}_x,\mc{C}_y}\varphi(x)\varphi(y)\mathrm{Cov}\left(G_R(x), G_R(y)\right)\mathrm{d}x\mathrm{d}y,\label{eqVar}
\end{equation}
where $\mathrm{Cov}\left(G_R(x), G_R(y)\right) = \mbb{E}\left[G_R(x)\cdot G_R(y)\right]$ is the covariance function of matrix resolvent scaled by the matrix dimension $R$. In the context of free probability theory, this covariance function is known as the second order Cauchy transform~\cite{CMSS07} and is denoted as~$\mathrm{Cov}\left(G_R(x), G_R(y)\right)\triangleq\mc{G}_\mb{Q}(x,y)$.

The rest of this section is devoted to derive the second order Cauchy transform of $\mb{Q}$ and the asymptotic variance of capacity $\sigma_{\mc{I}}^2$ by using a recent result from free probability theory. Namely, by the framework of the second order freeness~\cite{CMSS07,AMpreprint}, the second order Cauchy transform $\mc{G}_{\mb{Q}}(x,y)$ exists if $\mb{Q}$ has a second order limiting distribution according to the following definition: 
\begin{definition}\label{defSecDist}
Let $\mb{A}_N$ be an $N\times N$ random matrix. We say that it has a second order limiting distribution if for all $m,n\ge 1$ the moments $\{\alpha_n\}_{n\ge 1}$ and the limits
\begin{align*}
\alpha_{m,n} = \lim_{N\rightarrow\infty} k_2\left(\mathrm{Tr}(\mb{A}_N^m),\mathrm{Tr}(\mb{A}_N^n)\right)
\end{align*}
exist and if for all $r\ge 3$ and all $n(1),\ldots,n(r)\ge 1$,
\begin{equation*}
\lim_{N\rightarrow\infty} k_r\left(\mathrm{Tr}\left(\mb{A}_N^{n(1)}\right),\ldots,\mathrm{Tr}\left(\mb{A}_N^{n(r)}\right)\right) = 0,
\end{equation*}
where $k_r$ denotes the $r$-th classic cumulant.
\end{definition}
As $\mb{\Theta}\mb{\Theta}^{\dag}/S$ and $\mb{\Psi}\mb{\Psi}^{\dag}/R$ are independent complex Wishart matrices, they are unitarily invariant and their second order limiting distributions exist~\cite[Th. 3.5]{MS06}. It follows from \cite[Eq. (29)]{CMSS07} and the cyclic invariant property of matrix trace that $\mb{Q}$ also has the second order limiting distribution. The second order Cauchy transform $\mc{G}_{\mb{Q}}(x,y)$ is given by the functional~\cite[Eq. (53)]{CMSS07}
\begin{align}\label{eqSecCauchy}
\mc{G}_\mb{Q}(x,y)=\mc{G}_\mb{Q}'(x) & \mc{G}_\mb{Q}'(y)\mc{R}(\mc{G}_\mb{Q}(x),\mc{G}_\mb{Q}(y))+\frac{\partial^2}{\partial x \partial y}\log\frac{\mc{G}_\mb{Q}(x)-\mc{G}_\mb{Q}(y)}{x-y},
\end{align}
where $\mc{R}(x,y)$ denotes the second order \mbox{$R$-transform} of $\mb{Q}$. Similar to the first order case, if $|x|\rightarrow\infty$ and $|y|\rightarrow\infty$, $\mc{G}_{\mb{Q}}(x,y)$ and $\mc{R}(x,y)$ have formal power series representations
\begin{align}
\mc{G}_{\mb{Q}}(x,y) = \sum_{m,n\ge 1}\alpha_{m,n}x^{-m-1}y^{-n-1},\quad\mathrm{and}\quad\mc{R}(x,y) = \sum_{m,n\ge 1}\kappa_{m,n}x^{m-1}y^{n-1}.\label{eqRxy}
\end{align}

In literature, the covariance function $\mc{G}_{\mb{B}}(x,y)$ for the Wishart type $N\times N$ random matrix $\mb{B} = (1/N) \mb{X}^{\dag}\mb{T}\mb{X}$ has been studied in~\cite{Bai2004}, where $\mb{T}$ is a non-random Hermitian matrix and $\mb{X}$ is a Gaussian like\footnote{Each entry of the Gaussian like matrix has the same second and fourth moments as a Gaussian random variable.} random matrix with \emph{i.i.d.} entries. Therein, the correlation function of $\mb{B}$ has the form
\begin{equation}\label{eqGB}
\mc{G}_{\mb{B}}(x,y) = \frac{\mc{G}_{\mb{B}}'(x)\mc{G}_{\mb{B}}'(y)}{\left(\mc{G}_{\mb{B}}(x)-\mc{G}_{\mb{B}}(y)\right)^2}-\frac{1}{(x-y)^2},
\end{equation}
and it is subsequently used to derive an asymptotic variance of Rayleigh MIMO capacity in~\cite{KamathTIT2005,Tulino05,DM05}. Note that the second term of the right hand side of (\ref{eqSecCauchy}) is exactly the same as (\ref{eqGB}) by replacing $\mb{B}$ with $\mb{Q} = (1/R)\mb{\Psi}^{\dag}\mb{P}\mb{\Psi}$ and assuming $\mb{P}=\mb{\Theta}\mb{\Theta}^{\dag}/S$ non-random. Therefore, the fluctuation of capacity $\sigma_{\mc{I}}^2$ of Rayleigh product channels has a distinct functional structure from the Rayleigh MIMO channels, see (\ref{eqVar}) and (\ref{eqSecCauchy}). The increased fluctuation is due to a non-zero \mbox{$R$-transform} $\mc{R}(x,y)$. Closed-form expressions of $\mc{G}_{\mb{Q}}(x,y)$ and $\mc{R}(x,y)$ are summarized in the following proposition:
\begin{proposition}\label{propCauchy2}
The second order Cauchy transform of $\mb{Q}$ is given by (\ref{eqSecCauchy}) with
\begin{align}
\mc{R}(x,y) =\frac{\mc{G}_\mb{P}'(1/x) \mc{G}_\mb{P}'(1/y)}{x^2 y^2 (\mc{G}_\mb{P}(1/x)-\mc{G}_\mb{P}(1/y))^2}-\frac{1}{(x-y)^2},\label{eqR}
\end{align}
where $\mc{G}_{\mb{P}}(z)$ is the Cauchy transform of a Mar\v{c}enko-Pastur distribution with the parameter $\zeta$ as in (\ref{eqDim})
\begin{equation}
\mc{G}_\mb{P}(z) = \frac{1}{2}+\frac{1-\zeta}{2z} - \sqrt{\frac{1}{4}-\frac{1+\zeta}{2z}+\frac{(1-\zeta)^2}{4z^2}}.\label{eqGp}
\end{equation}
\end{proposition}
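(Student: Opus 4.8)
\emph{Proof strategy.} The plan is to treat $\mb{Q} = \mb{H}\mb{H}^{\dag} = \frac{1}{R}\mb{\Psi}^{\dag}\mb{P}\mb{\Psi}$ as a \emph{compound Wishart} matrix whose ``population'' is the Mar\v{c}enko--Pastur matrix $\mb{P} = \mb{\Theta}\mb{\Theta}^{\dag}/S$ (with parameter $\zeta$) and whose Gaussian factor $\mb{\Psi}$ carries the aspect ratio $\rho = S/R$. Since the second order limiting distribution of $\mb{Q}$ has already been shown to exist, the whole content of the proposition is the closed form of the second order $R$-transform $\mc{R}(x,y)$ appearing in (\ref{eqSecCauchy}); once $\mc{R}(x,y)$ is known, (\ref{eqSecCauchy}) delivers $\mc{G}_\mb{Q}(x,y)$ directly. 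I would therefore aim at the generating function $\mc{R}(x,y) = \sum_{m,n\ge1}\kappa_{m,n}x^{m-1}y^{n-1}$ in (\ref{eqRxy}), i.e.\ at the second order free cumulants $\kappa_{m,n}$ of $\mb{Q}$.

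First I would compute the fluctuation moments $\alpha_{m,n} = \lim_{R\to\infty} k_2(\mathrm{Tr}(\mb{Q}^m),\mathrm{Tr}(\mb{Q}^n))$ by a Wick expansion over the i.i.d.\ complex Gaussian entries of $\mb{\Psi}$. Writing $\mathrm{Tr}(\mb{Q}^m)$ and $\mathrm{Tr}(\mb{Q}^n)$ as alternating products of $\mb{\Psi}^{\dag},\mb{P},\mb{\Psi}$ and pairing the $\mb{\Psi}$-factors, the connected ($k_2$) part in the regime (\ref{eqDim}) is dominated by those pairings that link the two trace cycles and are \emph{annular non-crossing} (planar on the cylinder); all higher-genus pairings are suppressed. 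Each surviving pairing evaluates to a product of normalized traces $\mathrm{tr}(\mb{P}^{k})$ of the population together with a power of $\rho$ from the index contractions. This is exactly the combinatorics organized by Lemma~\ref{lemmaCumu} and Appendix~\ref{appxNC}; the second order moment-to-cumulant relation then isolates $\kappa_{m,n}$, i.e.\ the genuinely non-universal fluctuations not already forced by the first-order Cauchy transform through the universal second term of (\ref{eqSecCauchy}).

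Next I would resum the series. The key device is that the population moment generating function satisfies $M_\mb{P}(u) \triangleq \sum_{k\ge1}\mathrm{tr}(\mb{P}^k)\,u^k = \frac{1}{u}\mc{G}_\mb{P}(1/u) - 1$, so every sum over powers of $\mb{P}$ produced above collapses into $\mc{G}_\mb{P}(1/x)$, $\mc{G}_\mb{P}(1/y)$ and their derivatives; this is the origin of the reciprocal arguments in (\ref{eqR}). Carrying out the resummation of the annular non-crossing contributions should yield (\ref{eqR}), after which one inserts the explicit Mar\v{c}enko--Pastur Cauchy transform (\ref{eqGp}) with parameter $\zeta$ to make $\mc{R}(x,y)$ fully explicit. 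A useful structural check that guides the algebra is that (\ref{eqR}) is exactly $\mc{R}(x,y) = \frac{1}{x^2 y^2}\mc{G}_\mb{P}(1/x,1/y)$, where $\mc{G}_\mb{P}(\cdot,\cdot)$ is the universal Wishart second order Cauchy transform (\ref{eqGB}) of the population $\mb{P}$ itself: substituting $x\mapsto 1/x$, $y\mapsto 1/y$ in (\ref{eqGB}) and multiplying by $1/(x^2 y^2)$ reproduces (\ref{eqR}) line for line, and in particular explains why $\mc{R}(x,y)$ carries no dependence on $\rho$.

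The main obstacle is this resummation step: one must show that the weighted sum over all annular non-crossing pairings -- with weights built from arbitrary products of $\mb{P}$-moments together with the $\rho$-bookkeeping factors -- telescopes into the compact expression (\ref{eqR}), including the cancellation of the $\rho$-dependence and the emergence of the counter-term $-1/(x-y)^2$. I expect this to demand a careful generating-function manipulation of the two-cycle annular non-crossing permutation sum; as an independent cross-check, I would verify that the resulting $\mc{R}(x,y)$ is consistent, via (\ref{eqSecCauchy}), with the first-order relation between $\mc{G}_\mb{Q}$ and $\mc{G}_\mb{P}$ implied by the self-consistent equation (\ref{eqGQ}).
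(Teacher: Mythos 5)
Your overall plan --- compute the second order free cumulants of $\mb{Q}$ combinatorially, recognize the annular non-crossing structure, and resum into the Mar\v{c}enko--Pastur Cauchy transform --- is the paper's route, and your closing structural identity $\mc{R}(x,y)=x^{-2}y^{-2}\,\mc{G}_{\mb{P}}(1/x,1/y)$, with $\mc{G}_{\mb{P}}(\cdot,\cdot)$ the universal Wishart kernel (\ref{eqGB}) applied to $\mb{P}$, is exactly what the paper ends up proving. But the first step, as you describe it, would fail. A Wick expansion over the entries of $\mb{\Psi}$ alone, with $\mb{P}$ frozen as a deterministic ``population'' and the connected part identified with the pairings that link the two trace cycles, is precisely the Bai--Silverstein computation: it reproduces only the universal covariance (\ref{eqGB}) with $\mb{B}=\mb{Q}$, i.e.\ only the second term of (\ref{eqSecCauchy}), and hence forces $\mc{R}(x,y)\equiv 0$. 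The paper makes this point explicitly in Section~\ref{secVar}: the entire content of Proposition~\ref{propCauchy2} is the \emph{extra} fluctuation caused by the randomness of $\mb{P}=\mb{\Theta}\mb{\Theta}^{\dag}/S$, which enters through $\mb{\Theta}$-contractions connecting the two trace cycles even when the $\mb{\Psi}$-pairing does not. Your own consistency check already signals this: the fact that $\mc{R}(x,y)$ is the second order Cauchy transform of $\mb{P}$ (reciprocal arguments, no $\rho$-dependence) says that $\mc{R}$ \emph{is} the fluctuation kernel of the population, so it cannot emerge from a computation in which the population does not fluctuate. The paper's Lemma~\ref{lemmaCumu} avoids this by computing $\kappa_{m,n}$ directly from the classical joint cumulants of the entries $q_{ij}$ via \cite[Th.~2.12]{CMSS07}, with distinct row indices $i(1),\ldots,i(m+n)$; there the $\mb{\Psi}$-contraction is forced and the annular non-crossing permutations live on the $\mb{\Theta}$-indices.

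The second issue is the resummation you flag as the main obstacle. The paper closes it without any new generating-function work: once Lemma~\ref{lemmaCumu} gives $\kappa_{m,n}=\sum_{\tau\in\mc{S}_{\operatorname{a-nc}}(m,n)}\zeta^{\#(\tau)}$, this is recognized as the second order \emph{moment} $\beta_{m,n}$ of the Mar\v{c}enko--Pastur law \cite[Eq.~(7.5)]{MN04}, whose first order free cumulants are constant ($c_n=\zeta$) and whose second order free cumulants vanish ($c_{m,n}=0$); the known second order moment--cumulant functional relation \cite[Eq.~(52)]{CMSS07} with $C(x,y)=0$ then yields (\ref{eqR}) in one line. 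Resumming the annular sum by hand, as you propose, would amount to re-deriving that cited identity. So: keep your target formula and your cross-check, but source the nontrivial cumulants from the fluctuations of $\mb{P}$ rather than from the $\mb{\Psi}$-pairings, and invoke the second order moment--cumulant relation instead of attacking the annular sum directly.
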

\begin{proof}
The proof of Proposition~\ref{propCauchy2} depends on the combinatorial structure of cumulants $\{\kappa_n\}_{n\ge 1}$ and $\{\kappa_{m,n}\}_{m,n\ge 1}$ and is given in detail in Section~\ref{secCumu}.
\end{proof}

Substitute (\ref{eqSecCauchy}) into (\ref{eqVar}) and denote $G(x) = \mc{G}_{\mb{P}}(1/\mc{G}_{\mb{Q}}(x))$, then the asymptotic variance $\sigma_{\mc{I}}^2$ is rewritten as
\begin{equation}
\sigma_{\mc{I}}^2 = -\frac{1}{4\pi^2}\oiint_{\mc{C}_x,\mc{C}_y} \frac{\varphi(x)\varphi(y)}{(G(x)-G(y))^2}\mathrm{d}G(x)\mathrm{d}G(y).\label{eqVarc}
\end{equation}
In the general setting, it is difficult to further simplify the double integral (\ref{eqVarc}). However, when the transmitter and receiver have equal number of antennas, i.e. $\zeta = 1/\rho$, a compact expression for $\sigma_{\mc{I}}^2$ can be obtained. The results are summarized in the following proposition.

\begin{proposition}\label{propVar}
When $R = T$, the asymptotic variance $\sigma_{\mc{I}}^2$ is given by
\begin{equation}
\sigma_{\mathcal{I}}^2 = \log\frac{\gamma(\omega_r-1)^2}{\gamma-\omega_r^2(2\omega_r-2)},\label{eqVars}
\end{equation}
where $\omega_r\le 0$ is the solution of the cubic equation
\begin{equation}\label{eqCubic}
t^3 -2 t^2 + (1-\gamma+\gamma\zeta) t + \gamma = 0.
\end{equation}
\end{proposition}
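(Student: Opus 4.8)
The plan is to evaluate the double contour integral (\ref{eqVarc}) directly by rationalizing the kernel through the change of variables $u=G(x)$, $v=G(y)$, where $G(x)=\mc{G}_\mb{P}(1/\mc{G}_\mb{Q}(x))$. Since $\mathrm{d}G(x)\,\mathrm{d}G(y)/(G(x)-G(y))^2$ becomes $\mathrm{d}u\,\mathrm{d}v/(u-v)^2$, this turns (\ref{eqVarc}) into
\begin{equation*}
\sigma_{\mc{I}}^2 = -\frac{1}{4\pi^2}\oiint_{\Gamma_u,\Gamma_v}\frac{\Phi(u)\Phi(v)}{(u-v)^2}\,\mathrm{d}u\,\mathrm{d}v,\qquad \Phi(u):=\log\!\left(1+\gamma\,G^{-1}(u)\right),
\end{equation*}
with $\Gamma_u=G(\mc{C}_x)$ and $\Gamma_v=G(\mc{C}_y)$. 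Because $G(x)\sim 1/x$ as $x\to\infty$ (so $G(\infty)=0$), the map behaves like an inversion: it sends the exterior of $\mc{C}_x$ to a neighbourhood of the origin, so $\Gamma_u,\Gamma_v$ are loops around $u=0$ with reversed orientation and nesting relative to $\mc{C}_x,\mc{C}_y$. The key observation is that $\Phi$ is singular only at two points inside these loops: a logarithmic branch point at $u=0$ (the image of $x=\infty$, where $\varphi(x)=\log(1+\gamma x)$ diverges) and a logarithmic branch point at $u=\omega_r:=G(-1/\gamma)$ (the image of the branch point $x=-1/\gamma$ of $\varphi$). These two points carry the entire value of the integral.

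Next I would integrate by parts in each variable, using $1/(u-v)^2=\partial_u\partial_v\log(u-v)$, to move the derivatives onto $\Phi$ and obtain $-\tfrac{1}{4\pi^2}\oiint\Phi'(u)\Phi'(v)\log(u-v)\,\mathrm{d}u\,\mathrm{d}v$. The advantage is that $\Phi'(u)=\gamma (G^{-1})'(u)/(1+\gamma G^{-1}(u))$ is now meromorphic near the origin, with simple poles of residue $-1$ at $u=0$ and $+1$ at $u=\omega_r$ (the constant $2\pi\imath$ jump of $\Phi$ across its cut has zero derivative, so $\Phi'$ is single-valued off the endpoints). Performing the inner $u$-integral by residues then collapses one dimension,
\begin{equation*}
\sigma_{\mc{I}}^2 = \mp\frac{1}{2\pi\imath}\oint_{\Gamma_v}\Phi'(v)\,\log\frac{v-\omega_r}{v}\,\mathrm{d}v,
\end{equation*}
the sign being fixed by the orientation reversal noted above. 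This remaining integral I would evaluate with a dog-bone contour hugging the segment from $0$ to $\omega_r$: across the cut $\log\frac{v-\omega_r}{v}$ jumps by $2\pi\imath$ while $\Phi'$ is continuous, and the logarithmic divergences of the two straight pieces at the endpoints cancel against the small circles encircling the simple poles of $\Phi'$, leaving a finite closed form.

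The algebraic heart of the proof is to express everything through $\omega_r$. I would use the defining relation (\ref{eqGQ}) for $\mc{G}_\mb{Q}$, which after rewriting $\int \lambda\,\mathrm{d}F_\mb{P}(\lambda)/(1-\lambda\mc{G}_\mb{Q})$ in terms of $\mc{G}_\mb{P}$ yields the clean identity $z\,g^2=(1-\rho)g+\rho w$ with $g=\mc{G}_\mb{Q}(z)$ and $w=G(z)$, together with the Mar\v{c}enko--Pastur quadratic $w^2-(1+(1-\zeta)g)w+g=0$ following from (\ref{eqGp}). Eliminating $g$ between these two relations at $z=-1/\gamma$ gives the polynomial satisfied by $\omega_r=w|_{z=-1/\gamma}$. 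The decisive simplification occurs for $R=T$, i.e. $\rho\zeta=1$: the factor $(\rho\zeta-1)$ in (\ref{eqcubicGQ}) vanishes, the $w^2$ and constant coefficients collapse, and the resultant reduces exactly to the cubic $t^3-2t^2+(1-\gamma+\gamma\zeta)t+\gamma=0$ of (\ref{eqCubic}); the branch $\omega_r\le 0$ selects the physical root. Substituting this root into the collapsed single integral then produces (\ref{eqVars}).

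The hardest part will be the contour bookkeeping rather than the algebra: one must handle the coincident simple-pole/branch-point singularities at both $u=0$ and $u=\omega_r$, where the naive replacement of $\Phi$ by its $2\pi\imath$ jump fails because $\Phi$ itself diverges at the endpoints, and one must track how the inversion-like map $G$ reverses orientation and the nesting of $\mc{C}_x$ and $\mc{C}_y$ so that exactly one residue is captured in the inner integral. Once the finite value of the dog-bone integral is pinned down and combined with the $\rho\zeta=1$ reduction of $\omega_r$, the compact expression (\ref{eqVars}) follows.
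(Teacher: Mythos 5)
Your proposal follows essentially the same route as the paper's proof in Appendix~\ref{appxAsyVar}: the substitution $t=G(x)$, integration by parts so that only the simple poles of $\gamma\left(G^{-1}\right)'/\left(1+\gamma G^{-1}\right)$ at $t=0$ and $t=\omega_r=G(-1/\gamma)$ (a root of the cubic (\ref{eqCubic})) contribute, and a residue computation that collapses the double contour integral to the closed form (\ref{eqVars}). The only tactical difference is the final step, where the paper sidesteps your dog-bone contour and its delicate endpoint cancellations by splitting $\log\left(1+\gamma G^{-1}(t)\right)$ into $\log\frac{t-\omega_r}{t}$, whose integral vanishes because it has a single-valued antiderivative, plus $\log\frac{(t-\omega_{+})(t-\omega_{-})}{(t-1)^2}$, which is analytic inside the contour and is picked up directly by the residues at $t=0$ and $t=\omega_r$.
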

\begin{proof}
The proof of Proposition~\ref{propVar} is in Appendix~\ref{appxAsyVar}.
\end{proof}
Using Cardano's formula to solve the cubic equation~(\ref{eqCubic}), the explicit expressions for the roots are
\begin{align}
t_1 &= \frac{2}{3}-\frac{3\gamma\zeta-3\gamma-1}{3u(\gamma,\zeta)}+\frac{u(\gamma,\zeta)}{3}\label{eqt1},\\
t_2 &= \frac{2}{3}+e^{\frac{\imath\pi}{3}}\frac{3\gamma\zeta-3\gamma-1}{3u(\gamma,\zeta)}-e^{-\frac{\imath\pi}{3}}\frac{u(\gamma,\zeta)}{3}\label{eqt2},\\
t_3 &= \frac{2}{3}+e^{-\frac{\imath\pi}{3}}\frac{3\gamma\zeta-3\gamma-1}{3u(\gamma,\zeta)}-e^{\frac{\imath\pi}{3}}\frac{u(\gamma,\zeta)}{3}\label{eqt3},
\end{align}
where
\begin{equation}
u(\gamma,\zeta) = \left(\sqrt{(3\gamma\zeta-3\gamma-1)^3+\left(1+\frac{9\gamma}{2}+9\gamma\zeta\right)^2}-1-\frac{9\gamma}{2}-9\gamma\zeta\right)^{1/3}\label{equ}.
\end{equation}
For general values of $\gamma$ and $\zeta$, it is not straightforward to gain insights based on the variance expressions (\ref{eqVars}), (\ref{eqt1})-(\ref{equ}). However, in the high SNR regime with $\gamma\gg 1$, the asymptotic variance $\sigma_{\mc{I}}^2$ is characterized by explicit expressions and the behavior of capacity fluctuation can be understood.
\begin{corollary}\label{corollary1}
In the high SNR regime $\gamma\gg 1$, the asymptotic variance $\sigma_{\mc{I}}^2$ is given by (\ref{eqVars}) with $\omega_r$ approximated by
\begin{equation*}
\omega_r \approx \left\{\begin{array}{ll}\frac{1}{1-\zeta} & \zeta>1 \\
\frac{2}{3}-\sqrt{(1-\zeta)\gamma} & 0<\zeta<1.
\end{array}
\right.
\end{equation*}
For a fixed SNR, the variance of channel capacity is highest when the number of scattering objects equals to the number of antennas.
\end{corollary}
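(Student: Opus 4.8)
The plan is to obtain the two high-SNR branches by a dominant-balance analysis of the cubic (\ref{eqCubic}) and then to feed the resulting asymptotics of $\omega_r$ into the closed-form variance (\ref{eqVars}) of Proposition~\ref{propVar}. First I would remove the quadratic term by the depressed-cubic substitution $t = w + 2/3$, which turns (\ref{eqCubic}) into $w^3 + p w + q = 0$ with $p = -\tfrac13 - \gamma(1-\zeta)$ and $q = \tfrac{2}{27} + \tfrac{\gamma}{3}(1+2\zeta)$. This already explains the ubiquitous constant $2/3$ in the approximations and in the exact roots (\ref{eqt1})--(\ref{eqt3}): by Vieta it is simply the mean of the three roots, so every branch of $\omega_r$ carries it as an additive offset. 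The remaining task is to locate the branch with $\omega_r \le 0$ in the two regimes $\zeta>1$ and $0<\zeta<1$ as $\gamma\to\infty$.

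Next I would identify the correct dominant balance in $w^3 + pw + q = 0$. For $\zeta>1$ one has $p\sim\gamma(\zeta-1)\to+\infty$, the discriminant is negative, and the unique real root stays bounded; balancing the two $O(\gamma)$ terms $pw+q\approx 0$ gives $w\to-\tfrac{1+2\zeta}{3(\zeta-1)}$, hence $\omega_r = w+2/3\to \tfrac{1}{1-\zeta}<0$, which is the stated branch. For $0<\zeta<1$ one has $p\sim-\gamma(1-\zeta)\to-\infty$, so the root must grow; positing $w=O(\sqrt\gamma)$ and balancing the two $O(\gamma^{3/2})$ terms $w^3+pw\approx 0$ yields $w\approx\pm\sqrt{(1-\zeta)\gamma}$, and selecting the negative branch (the other choices give $\omega_r>0$) produces $\omega_r\approx 2/3-\sqrt{(1-\zeta)\gamma}$. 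I would note here that the genuine $O(1)$ correction to this branch is nonzero but subdominant for what follows, so it may be dropped.

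With the branches in hand I would substitute them into (\ref{eqVars}) to read off the high-SNR scaling of $\sigma_{\mc{I}}^2$ in each regime. For $\zeta>1$ the bounded $\omega_r$ makes the denominator $\gamma-\omega_r^2(2\omega_r-2)\sim\gamma$ cancel the numerator's $\gamma$, so $\sigma_{\mc{I}}^2\to 2\log\tfrac{\zeta}{\zeta-1}$ saturates to a constant. For $0<\zeta<1$ the cubic growth $\omega_r^2(2\omega_r-2)\sim-2((1-\zeta)\gamma)^{3/2}$ dominates the denominator while the numerator is $\sim(1-\zeta)\gamma^2$, giving $\sigma_{\mc{I}}^2\sim\tfrac12\log\gamma$. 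To prove the final sentence of the corollary I would also analyse the borderline $\zeta=1$ separately, where $p$ is $O(1)$, the balance is $w^3\approx-\gamma$, and $\omega_r\sim-\gamma^{1/3}$, which feeds into (\ref{eqVars}) to give $\sigma_{\mc{I}}^2\sim\tfrac23\log\gamma$. Comparing the leading coefficients $\tfrac23>\tfrac12>0$ shows that, at fixed large $\gamma$, the capacity variance grows fastest at $\zeta=1$ and is therefore maximized when the number of scatterers equals the number of antennas (recalling $R=T$ and $\zeta=R/S$).

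The main obstacle I anticipate is the root selection together with the degenerate transition at $\zeta=1$. One must verify in each regime that the branch satisfying $\omega_r\le 0$ is exactly the one produced by the chosen dominant balance, and one must reconcile the two approximation formulas, which both become singular as $\zeta\to1$, with the genuinely different $\gamma^{1/3}$ scaling that holds on the boundary. A secondary point to check is that the $O(1)$ pieces discarded from $\omega_r$ affect only the $o(1)$ remainder of $\sigma_{\mc{I}}^2$ and not its leading logarithmic growth (nor its constant term), so that (\ref{eqVars}) evaluated at the approximate $\omega_r$ faithfully reproduces the claimed behaviour.
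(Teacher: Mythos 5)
Your proposal is correct, and it reaches both branches of the approximation and the final maximization claim by a genuinely different route from the paper. For the root asymptotics, the paper works directly from the explicit Cardano expressions (\ref{eqt1})--(\ref{equ}): it expands $u(\gamma,\zeta)$ and $u(\gamma,\zeta)^2$ for $\gamma\to\infty$, identifies $\omega_r$ with $t_1$ when $\zeta>1$ and with $t_2$ when $\zeta<1$, and must handle the branch choice $(-1)^{1/6}=e^{\imath\pi/6}$ to extract the real root in the latter case; your depressed-cubic substitution $t=w+2/3$ and dominant-balance argument bypasses Cardano entirely, makes the sign/branch selection transparent (only one root of the balance is negative in each regime), and explains the recurring offset $2/3$ via Vieta. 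Both derivations discard the same nonzero $O(1)$ correction in the $\zeta<1$ branch, consistent with the corollary's $\approx$, and your check that this only perturbs $\sigma_{\mc{I}}^2$ by $o(1)$ is sound. For the last sentence, the two arguments differ more substantially: the paper differentiates (\ref{eqVars}) in $\zeta$ with the approximate $\omega_r$ inserted and argues monotonicity on each side of $\zeta=1$, whereas you compare leading growth rates, $\sigma_{\mc{I}}^2\to 2\log\frac{\zeta}{\zeta-1}$ for $\zeta>1$, $\sigma_{\mc{I}}^2\sim\frac12\log\gamma$ for $\zeta<1$, and $\sigma_{\mc{I}}^2\sim\frac23\log\gamma$ at $\zeta=1$ (a case the paper never treats explicitly, since both of its approximations blow up there); your explicit $\omega_r\sim-\gamma^{1/3}$ analysis at the boundary is a genuine addition. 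The only caveat worth flagging is that your growth-rate comparison establishes the maximum at $\zeta=1$ pointwise in $\zeta$ for sufficiently large $\gamma$ rather than uniformly, while the paper's monotonicity argument nominally covers all $\zeta$ at once; but since the paper's own expansions are also invalid near $\zeta=1$, both proofs sit at the same heuristic level and your version is, if anything, more informative about what happens on the boundary.
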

\begin{proof}
The proof of Corollary~\ref{corollary1} is in Appendix~\ref{appxCorollary1}.
\end{proof}


The asymptotic variance $\sigma_{\mc{I}}^2$ is derived with the assumption that the dimensions of matrices are large. However, $\sigma_\mc{I}^2$ serves as a good approximation for the variance of capacity even when the matrix dimensions are comparable to realistic MIMO systems. In Fig.~\ref{figVar}, we plot the variance of channel capacity as a function of the number of scattering objects~$S$ for $4\times 4$ and $8\times 8$ MIMO systems. The asymptotic variance is calculated by Proposition~\ref{propVar} at SNRs $\gamma$ ranging from $-20$~dB to~$20$~dB with a step size of $5$~dB. The analytical calculations are compared with Monte Carlo simulations, where each curve is generated by $10^6$ independent channel realizations. We also plot the asymptotic variance of a conventional Rayleigh MIMO channel using~\cite[Eq. (13)]{KamathTIT2005}. Fig.~\ref{figVar} shows that the asymptotic variance achieves a good agreement with the simulations for a wide range of SNRs and numbers of scatterings, especially in the low SNR regime. It is only when $\gamma>10$~dB that there are observable gaps between analytical and simulation curves. The asymptotic variance for a $8\times 8$ MIMO system remains a better approximation than that of a $4\times 4$ system, as expected. In the high SNR regime, see Fig.~\ref{figVar}\ (a), there exists a peak value for the variance of capacity when $S>1$. As the SNR $\gamma$ increases, the peak of the variance occurs at a fixed value $S = R = T$ ($\zeta = \rho = 1$), which is in line with our prediction in Corollary~\ref{corollary1}. This is analogous to the observations in~\cite{HochwaldTIT2004} that the capacity variance of the conventional Rayleigh MIMO channel is largest when $R=T$. On the other hand, the variance is monotonically decreasing in the low SNR regime, see Fig.~\ref{figVar}\ (b). As the number of scatterers becomes large, we also observe that the capacity variance of the Rayleigh product channel approaches a limit. This limit is set by the variance of conventional Rayleigh MIMO channel with the same number of antennas. This agrees with the results in~\cite{LL11}, where the multi-keyhole channel converges to a Rayleigh MIMO channel when the number of scatterers is large.

\begin{figure}[t!]
\centerline{\subfigure[]{\includegraphics[width=4in]{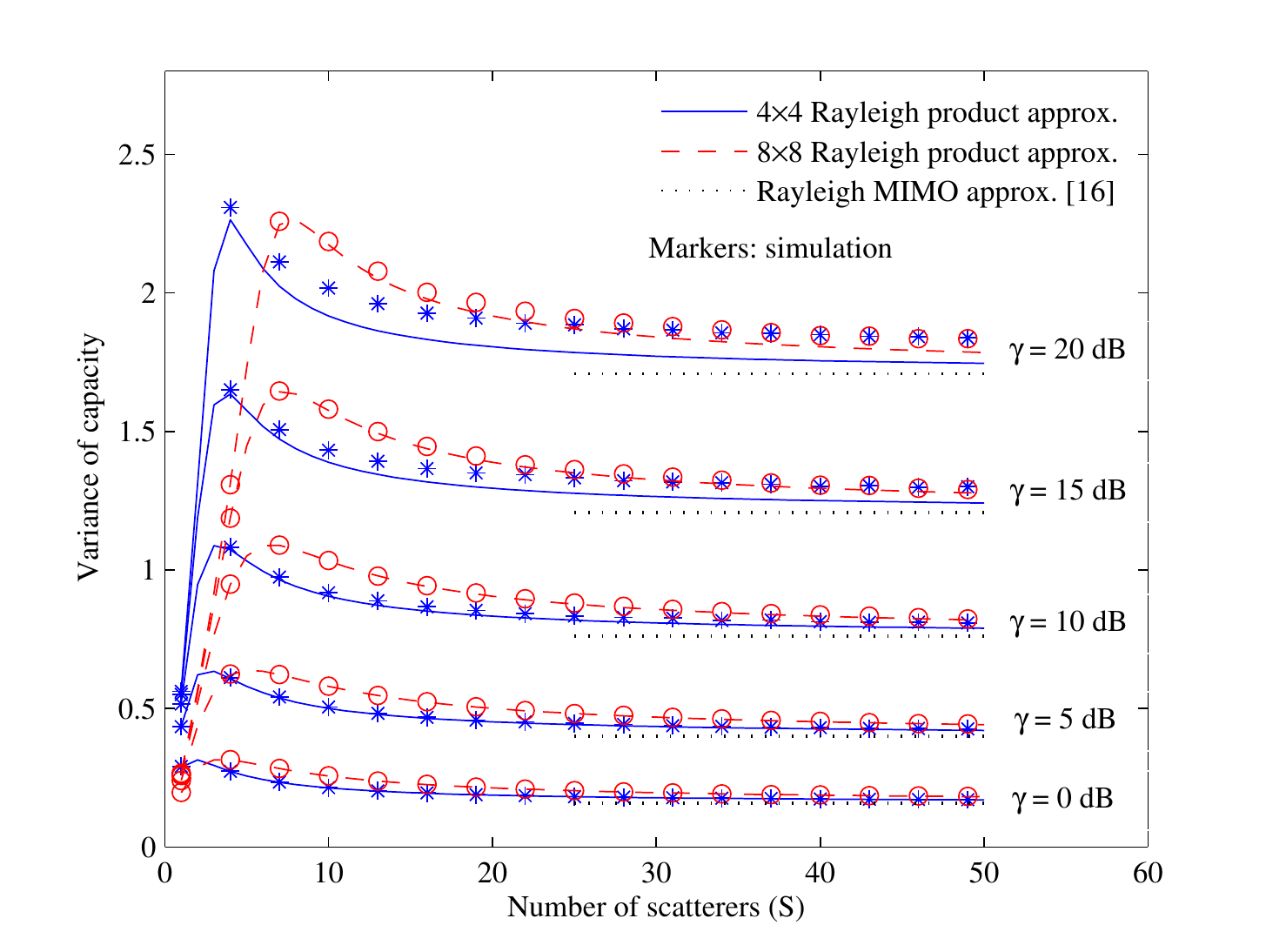}}\label{figVarH}}
\centerline{\subfigure[]{\includegraphics[width=4in]{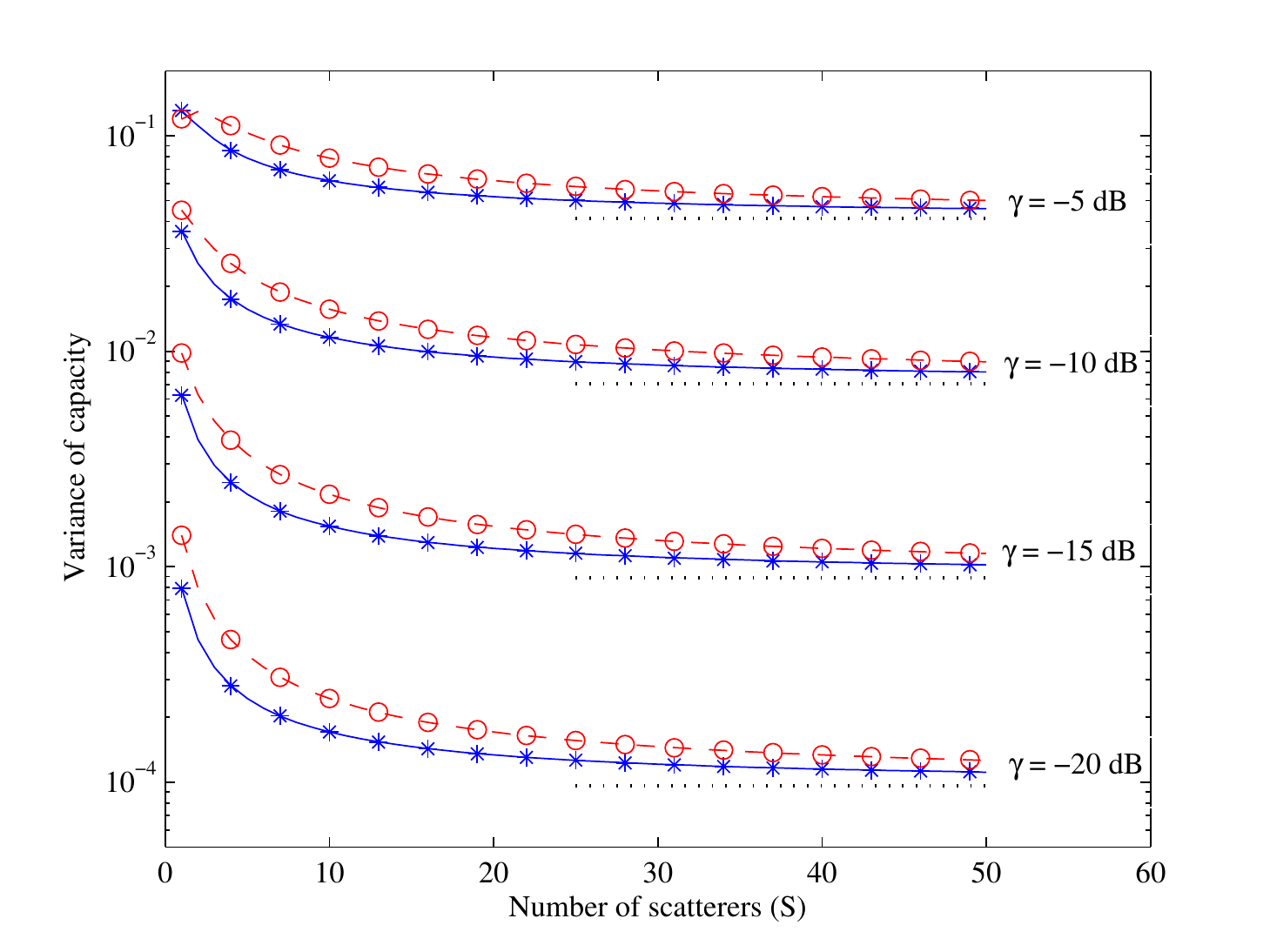}}\label{figVarL}}
\caption{Variance of the Rayleigh product channel capacity. Solid line: asymptotic variance of $4\times 4$ Rayleigh product channel~(\ref{eqVars}); dashed line: asymptotic variance of $8\times 8$ Rayleigh product channel~(\ref{eqVars}); dotted line: asymptotic variance of Rayleigh MIMO channel \cite{KamathTIT2005}; markers: simulation. (a) 0 dB$\le \gamma\le 20$ dB (b) $-20$ dB$\le\gamma\le -5$ dB.}
\label{figVar}
\end{figure}

\section{Second Order Cumulants and Cauchy Transform}\label{secCumu}

This section is devoted to the proof of Proposition~\ref{propCauchy2}, which relies on the knowledge of free cumulants of $\mb{Q}$. Let us recall from (\ref{eqRz}) and (\ref{eqRxy}) that $\mc{R}(z)$ and $\mc{R}(x,y)$ are generating functions for the cumulant sequences $\{\kappa_n\}_{n\ge 1}$ and $\{\kappa_{m,n}\}_{m,n\ge 1}$, respectively. We will first deduce the combinatorial descriptions of $\kappa_n$ and $\kappa_{m,n}$. These results reveal that the cumulant sequences of $\mb{Q}$ have the same combinatorial structures as the moment sequences of a Mar\v{c}enko-Pastur distribution. Namely, $\mc{R}(x,y)$ can be obtained based on known results. The notations and terminologies used in the formulation of Lemma~\ref{lemmaCumu} and in the proof of Proposition~\ref{propCauchy2} are given in Appendix~\ref{appxNC}.
\begin{lemma}\label{lemmaCumu}
For integers $m,n\ge 1$, the first order free cumulant $\kappa_n$ of matrix $\mb{Q}$ is given by
\begin{equation}\label{eqCumu1}
\kappa_n = \rho\sum_{\tau\in \mc{S}_{\operatorname{d-nc}}(n)} \zeta^{\#(\tau)},
\end{equation}
and the second order free cumulant $\kappa_{m,n}$ is given by
\begin{equation}\label{eqCumu2}
\kappa_{m,n} = \sum_{\tau\in \mc{S}_{\operatorname{a-nc}}(m,n)} \zeta^{\#(\tau)},
\end{equation}
where $\mc{S}_{\operatorname{d-nc}}(n)$ and $\mc{S}_{\operatorname{a-nc}}(m,n)$ denote the set of non-crossing permutation in disc and annular sense.
\end{lemma}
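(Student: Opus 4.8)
The plan is to reduce the computation of \emph{both} cumulant sequences of $\mb{Q}$ to the first- and second-order moments of the Mar\v{c}enko--Pastur matrix $\mb{P}=\mb{\Theta}\mb{\Theta}^{\dag}/S$, and then to read those moments off from their known non-crossing expansions. The point of departure is the identity $\mb{Q}=\frac{1}{R}\mb{\Psi}^{\dag}\mb{P}\mb{\Psi}$, which exhibits $\mb{Q}$ as a \emph{compound Wishart} matrix: conditioned on $\mb{\Theta}$, it is a Gaussian quadratic form in the $S\times R$ matrix $\mb{\Psi}$ with inner weight $\mb{P}$ and aspect ratio $\rho=S/R$. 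I would therefore expand $\mbb{E}[\mathrm{Tr}(\mb{Q}^{n})]$ and the covariances $k_2(\mathrm{Tr}(\mb{Q}^{m}),\mathrm{Tr}(\mb{Q}^{n}))$ by writing $\mb{Q}^{n}$ in index form and applying Wick's theorem first to the Gaussian entries of $\mb{\Psi}$ (with $\mb{P}$ frozen) and then to those of $\mb{\Theta}$.

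Each Wick contraction is a pairing, hence a permutation, and the summation over the free receive-, scatterer-, and transmit-antenna indices contributes a factor $R$, $S$, or $T$ for every closed index loop. Organised as a ribbon-graph (genus) expansion, the normalisation $1/(RS)^{n}$ together with $\rho=S/R$ and $\zeta=T/S$ forces only planar contributions to survive the limit~(\ref{eqDim}): genus-zero diagrams with a single boundary circle govern the first-order moments, and connected genus-zero diagrams with two boundary circles govern the covariances. The surviving pairings are precisely the non-crossing permutations in the disc sense, $\mc{S}_{\operatorname{d-nc}}(n)$, and in the annular sense, $\mc{S}_{\operatorname{a-nc}}(m,n)$, defined in Appendix~\ref{appxNC}. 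The key bookkeeping is to check that, for each such $\tau$, the number of transmit-index loops equals the cycle count $\#(\tau)$, producing the weight $\zeta^{\#(\tau)}$, while the dimensional mismatch between the $R$-dimensional trace and the $S$-sized inner structure yields the single prefactor $\rho$ at first order (and cancels at second order, as is characteristic of second-order freeness).

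The final step is to pass from moments to free cumulants. Because the diagrammatic weights factorise multiplicatively over the connected components (blocks) of a non-crossing partition, the moment--cumulant relation on the lattice of non-crossing partitions identifies the first free cumulant $\kappa_n$ with the single-block (irreducible) contribution, and likewise $\kappa_{m,n}$ with the connected annular contribution. Equivalently, one may invoke the first- and second-order free compound-Poisson identities, which state that $\kappa_n(\mb{Q})=\rho\,m_n(\mb{P})$ and that $\kappa_{m,n}(\mb{Q})$ coincides with the corresponding second-order moment of $\mb{P}$; the standard Mar\v{c}enko--Pastur moment formula $m_n(\mb{P})=\sum_{\tau\in\mc{S}_{\operatorname{d-nc}}(n)}\zeta^{\#(\tau)}$ then gives~(\ref{eqCumu1}), and its annular analogue gives~(\ref{eqCumu2}).

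I expect the main obstacle to lie in the second-order (annular) part: establishing rigorously that the connected covariance is governed precisely by the \emph{genus-zero annular} non-crossing permutations, and matching the $\zeta$-exponent to $\#(\tau)$ on the cylinder, requires a careful Euler-characteristic argument together with the annular non-crossing formalism. I would circumvent the hardest counting by importing the second-order freeness of the two independent Wishart factors $\mb{\Theta}\mb{\Theta}^{\dag}/S$ and $\mb{\Psi}\mb{\Psi}^{\dag}/R$, which exists by~\cite[Th. 3.5]{MS06}, and the compounding relation of~\cite[Eq. (29)]{CMSS07}, thereby reducing the topological bookkeeping to the already-tabulated annular moments of the Mar\v{c}enko--Pastur law.
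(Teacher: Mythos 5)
Your diagrammatic core --- Wick contractions on both Gaussian factors, counting index loops with weights $R$, $S$, $T$, and letting the normalization together with the disc/annular geodesic conditions kill everything except the non-crossing permutations --- is exactly the engine of the paper's proof (Appendix~\ref{appxLemmaCumu} applies Lemma~\ref{lemmaWick} twice and obtains the weight $S^{\#(\tau^{-1}\eta_0)}T^{\#(\tau)}$, then uses (\ref{eqDiscGeo}) and (\ref{eqAnnGeo})). The difference, and the place where your plan has a genuine gap, is the entry point for the \emph{second-order cumulant}. You propose to compute the covariance $k_2(\mathrm{Tr}(\mb{Q}^m),\mathrm{Tr}(\mb{Q}^n))$ and then claim that ``the moment--cumulant relation identifies $\kappa_{m,n}$ with the connected annular contribution.'' That identification is not correct: the trace covariance already \emph{is} the connected part and converges to the second-order moment $\alpha_{m,n}$, not to $\kappa_{m,n}$. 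Connectedness separates $\alpha_{m,n}$ from products of first-order moments; it does not separate $\kappa_{m,n}$ from $\alpha_{m,n}$. Extracting $\kappa_{m,n}$ from $\alpha_{m,n}$ requires the full second-order moment--cumulant relation (the paper's (\ref{eqSecMoCumu})), which mixes second-order cumulants with products of first-order cumulants over pairs of disc-non-crossing permutations, and your proposal never performs that inversion. The paper sidesteps this entirely by starting from \cite[Th.~2.12]{CMSS07}, which expresses $\kappa_{m,n}$ directly as $\lim_R R^{m+n}k_{m+n}(\mb{q}_{m,n})$, the classical joint cumulant of a cyclic word of \emph{entries} with distinct fixed indices; Wick's lemma then reduces this classical cumulant to a difference of just two partition terms, and the $(m,n)$-connectedness and annular non-crossing structure fall out automatically, with no Möbius inversion over annular objects needed.

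Your fallback --- ``invoke the first- and second-order free compound-Poisson identities $\kappa_n(\mb{Q})=\rho\,m_n(\mb{P})$ and $\kappa_{m,n}(\mb{Q})=\beta_{m,n}(\mb{P})$'' --- is fine at first order (that is the standard compound free Poisson fact, and combined with the Mar\v{c}enko--Pastur moment formula it does give (\ref{eqCumu1})), but at second order it assumes precisely what the lemma asserts: the identity $\kappa_{m,n}(\mb{Q})=\beta_{m,n}(\mb{P})$ is the content of (\ref{eqCumu2}) (as the paper notes immediately after the lemma), and \cite[Eq.~(29)]{CMSS07} establishes only the \emph{existence} of the second-order limiting distribution of the product, not this cumulant formula. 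So either you carry out the entry-level Wick computation of the classical cumulants as the paper does, or you must supply a proof of the second-order compound-Poisson identity --- which amounts to the same computation.
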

\begin{proof}
The proof of Lemma \ref{lemmaCumu} is in Appendix~\ref{appxLemmaCumu}.
\end{proof}

By comparing (\ref{eqCumu1}) with the $n$-th moment $\beta_n$ of $\mb{P}$~\cite[Eq. (7.3)]{MN04}, we have $\kappa_n = \rho\,\beta_n$ with $n\ge 0$. The cumulant $\kappa_n$ can be viewed as the scaled version of the moment $\beta_n$, where the normalized trace has the normalization factor $R$ instead of the actual matrix dimension~$S$.
Similarly, by comparing (\ref{eqCumu2}) with the second order moment $\beta_{m,n}$ of $\mb{P}$~\cite[Eq. (7.5)]{MN04}, we have the second order cumulant $\kappa_{m,n} = \beta_{m,n}$, where no normalization is needed as in Definition~\ref{defSecDist}. Note that the normalization for the first order moments can be arbitrarily chosen without affecting the underlying combinatorial structures provided that the normalization factor grows at the same rate as the matrix dimensions. The functional relations between moments and cumulants as well as their generating functions $\mc{G}_{\mb{Q}}(\cdot)$,  $\mc{R}(\cdot)$ are therefore preserved. For notational simplicity, it is convenient to work with the properly scaled moment sequences~$\{\beta_n\}$ and~$\{\beta_{m,n}\}$. 

As $\beta_n$ is the $n$-th moment of a Mar\v{c}enko-Pastur distribution with parameter $\zeta$, by the moment-cumulant relation~\cite{NS06}, the corresponding cumulant $c_n$ equals $\zeta$. In addition, it follows from the second order moment-cumulant relation~\cite{CMSS07} that the moment $\beta_{m,n}$ can be expressed in terms of $c_n$ as well as the corresponding second order cumulant $c_{m,n}$
\begin{equation}\label{eqSecMoCumu}
\beta_{m,n} = \sum_{\pi\in\mc{S}_{\operatorname{a-nc}}(m,n)} \prod_{i=1}^r c_{n_i} + \sum_{\substack{\pi_1\in\mc{S}_{\operatorname{d-nc}}(m)\\ \pi_2\in\mc{S}_{\operatorname{d-nc}}(n) \\ |\mc{V}|=|\pi_1\times\pi_2|+1}} c_{m_k,n_l}\prod_{\gfrac{i=1}{i\neq k}}^r c_{m_i}\,\prod_{\gfrac{j=1}{j\neq l}}^t c_{n_j}.
\end{equation}
On the right hand side of (\ref{eqSecMoCumu}), $\pi\in\mc{S}_{\operatorname{a-nc}}(m,n)$ contains $r\ge 1$ orbits and the $i$-th orbit contains $n_i$ elements with $n_1+ \cdots + n_r = m+n$. In the second summation, $\pi_1\in \mc{S}_{\operatorname{d-nc}}(m)$ and $\pi_2\in \mc{S}_{\operatorname{d-nc}}(n)$ contain $r\ge 1$ and $t\ge 1$ orbits, respectively.  The $i$-th orbit of $\pi_1$ contains $m_i$ elements with $m_1+ \cdots + m_r = m$, and $j$-th orbit of $\pi_2$ contains $n_j$ elements with $n_1+\cdots + n_t=n$. The partition $\mc{V}$ is composed of elements from the $k$-th orbit of $\pi_1$ and the $l$-th orbit of $\pi_2$ and corresponds to the second order cumulant $c_{m_k,n_l}$. Inserting $c_n$ into (\ref{eqSecMoCumu}) and comparing with (\ref{eqCumu2}), we notice that $\beta_{m,n}$ is entirely determined by the summation over non-crossing annular permutation and (\ref{eqSecMoCumu}) is only valid when the second order cumulant $c_{m,n}$ is zero. In summary, the cumulant sequences are
\begin{equation}\label{eqkpu}
c_n = \zeta\quad \mbox{and}\quad c_{m,n}=0,\quad m,n\ge 1.
\end{equation}

Apply the functional relation \cite[Eq. (52)]{CMSS07}
\begin{equation}\label{eqMxy}
M(x,y)=C\left(xM(x),yM(y)\right)\frac{\frac{d}{dx}\left(xM(x)\right)}{M(x)}\frac{\frac{d}{dy}\left(yM(y)\right)}{M(y)}+xy\left(\frac{\frac{d}{dx}\left(xM(x)\right)\,\frac{d}{dy}\left(yM(y)\right)}{(xM(x)-yM(y))^2}-\frac{1}{(x-y)^2}\right),
\end{equation}
where
\begin{align}
M(x) = 1 + \sum_{n\ge 1}\beta_n x^n,\quad M(x,y)=\sum_{m,n\ge1}\beta_{m,n} x^m y^n,\quad C(x,y) = \sum_{m,n\ge 1}c_{m,n}x^m y^n.\label{eqMxyb}
\end{align}
Due to (\ref{eqkpu}), the formal power series $C(x,y)=0$ and $x M(x) = \mc{G}_\mb{P}(1/x)$, and (\ref{eqMxy}) becomes
\begin{equation}
M(x,y) = \frac{\mc{G}_{\mb{P}}'(1/x)\mc{G}_{\mb{P}}'(1/y)}{x y (\mc{G}_{\mb{P}}(1/x)-\mc{G}_{\mb{P}}(1/y))^2}-\frac{xy}{(x-y)^2}.
\end{equation}
Comparing (\ref{eqRxy}) with (\ref{eqMxyb}), we obtain $\mc{R}(x,y) = M(x,y)/xy$. This completes the proof of Proposition~\ref{propCauchy2}.

\section{Asymptotic Capacity Distribution}\label{secCLT}

In this section, we prove a central limit theorem for the linear spectral statistics of the matrix $\mb{Q} = \mb{H}\mb{H}^{\dag}$ and show that the CDF of channel capacity $\mc{I}$ is asymptotically Gaussian as the matrix dimensions grow to infinity. This result generalizes the well-known CLT for the correlated Wishart matrix~\cite{Bai2004}. Together with the asymptotic mean and variance of capacity calculated in Propositions~\ref{propMu} and \ref{propVar}, the Gaussian convergence of capacity $\mc{I}$ gives a compact yet accurate approximation for the outage capacity. In addition, the approximative CDF of $\mc{I}$ is useful to analyze the diversity-multiplexing tradeoff of Rayleigh product channels in the finite SNR regime.

\subsection{Central Limit Theorem of Linear Spectral Statistics and Outage Probability}

Let $H_R(x) = R\left(\widetilde{F_{\mb{Q}}}(x) - F_{\mb{Q}}(x)\right)$, we are interested in the distribution of random variable
\begin{equation}
\mc{I} - \mu_{\mc{I}} = \int_{\mc{S}_{\mb{Q}}} \varphi(x)\mathrm{d}H_R(x).\label{eqmuH}
\end{equation}
Using the integral identities~(\ref{eqContourInt}) and (\ref{eqMIb}), we can rewrite (\ref{eqmuH}) as
\begin{equation}\label{eqmuG}
\mc{I} - \mu_{\mc{I}} = \frac{1}{2\pi \imath}\oint_{\mc{C}} \varphi(z)G_R(z)\mathrm{d}z,
\end{equation}
where $\mc{C}$ is the closed contour selected as in (\ref{eqContourInt}). In the following proposition, we prove that $G_R(z_1)$ and $G_R(z_2)$ with $z_1,z_2\in\mc{C}$ are jointly Gaussian distributed in the asymptotic regime~(\ref{eqDim}).

\begin{proposition}\label{propCLT}
In the asymptotic regime (\ref{eqDim}), $\{G_R(z)\}_{z\in\mc{C}}$ forms a tight sequence (see, e.g. \cite{Bai2004}) on a closed contour~$\mc{C}$ enclosing the support of $F_{\mb{Q}}(\cdot)$, and $G_{R}(z)$ converges weakly to a Gaussian process on the complex plane.
\end{proposition}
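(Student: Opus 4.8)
The plan is to establish weak convergence of the analytic process $\{G_R(z)\}_{z\in\mc{C}}$ through the two classical ingredients of the Bai--Silverstein method: convergence of the finite-dimensional distributions of the centred process to a multivariate Gaussian law, and tightness of the family $\{G_R(z)\}$ on the contour $\mc{C}$. The feature distinguishing this from \cite{Bai2004} is that the inner matrix $\mb{P}=\mb{\Theta}\mb{\Theta}^{\dag}/S$ appearing in $\mb{Q}=\frac{1}{R}\mb{\Psi}^{\dag}\mb{P}\mb{\Psi}$ is itself random, so the argument must propagate two independent layers of randomness rather than one.

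First I would write $G_R(z)=M_R(z)+b_R(z)$, splitting the centred fluctuation $M_R(z)=R(\wt{\mc{G}_{\mb{Q}}}(z)-\mbb{E}[\wt{\mc{G}_{\mb{Q}}}(z)])$ from the deterministic bias $b_R(z)=R(\mbb{E}[\wt{\mc{G}_{\mb{Q}}}(z)]-\mc{G}_{\mb{Q}}(z))$. Since $\mb{\Theta}$ and $\mb{\Psi}$ have Gaussian entries, all moments are finite and the usual truncation/centralization step is unnecessary, which streamlines the analysis. For the finite-dimensional distributions of $M_R$ I would fix $z_1,\dots,z_k\in\mc{C}$ and express each $\mathrm{Tr}((z_j\mb{I}-\mb{Q})^{-1})-\mbb{E}[\cdot]$ as a sum of martingale differences with respect to the filtration obtained by exposing, one column at a time, the columns of $\mb{\Psi}$ and then the columns of $\mb{\Theta}$, using a Sherman--Morrison rank-one expansion at each step. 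The martingale central limit theorem then requires verifying a Lindeberg/Lyapunov negligibility condition for the individual differences and computing the limiting covariance; by construction this covariance must coincide with the second order Cauchy transform $\mc{G}_{\mb{Q}}(x,y)$ identified in Proposition~\ref{propCauchy2} and entering (\ref{eqSecCauchy})--(\ref{eqVar}), which already isolates the additional contribution of the $R$-transform term produced by the fluctuations of $\mb{P}$.

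For tightness I would establish the increment bound $\mbb{E}|M_R(z_1)-M_R(z_2)|^{2}\le K\,|z_1-z_2|^{2}$ together with $\sup_{z\in\mc{C}}\mbb{E}|M_R(z)|^{2}<\infty$, uniformly in the matrix dimensions; both follow from resolvent identities and standard bounds on the variance of quadratic forms in Gaussian vectors, and by the criterion used in \cite{Bai2004} they yield tightness of $\{M_R(z)\}$ on the compact contour. It then remains to show that the bias $b_R(z)$ converges uniformly on $\mc{C}$ to a non-random limit; this merely shifts the mean of the limiting process without affecting Gaussianity or tightness. Combining the convergent finite-dimensional Gaussian laws with tightness gives weak convergence of $G_R$ to a Gaussian process, and, since $\mc{I}-\mu_{\mc{I}}=\frac{1}{2\pi\imath}\oint_{\mc{C}}\varphi(z)G_R(z)\,\mathrm{d}z$ is a continuous linear functional of this process, the asymptotic Gaussianity of the capacity follows at once.

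The hard part will be controlling the two layers of randomness coherently. Conditioning on $\mb{\Theta}$ reduces $\mb{Q}$ to a correlated Wishart matrix with a random but frozen Hermitian weight $\mb{P}$, to which a Bai--Silverstein-type conditional CLT applies; but one must then prove that averaging over $\mb{\Theta}$ contributes a second, asymptotically independent Gaussian fluctuation and that the two layers combine \emph{additively} into a single Gaussian process whose covariance is exactly $\mc{G}_{\mb{Q}}(x,y)$. Keeping the cross terms between the two layers asymptotically negligible, and verifying that the $\mb{P}$-layer fluctuation preserves rather than destroys Gaussianity, is the delicate step; this is precisely where the second order freeness structure and the nonzero $\mc{R}(x,y)$ of Proposition~\ref{propCauchy2} must be matched against the martingale covariance computation.
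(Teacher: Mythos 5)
Your final paragraph correctly identifies the structure the paper exploits --- condition on the inner matrix, get one Gaussian layer from a Bai--Silverstein-type conditional CLT, and show the averaging over the inner matrix contributes a second, asymptotically independent Gaussian layer --- but you explicitly leave the combination of the two layers as ``the delicate step,'' and that step is precisely where the paper's proof does its work. Two concrete ingredients are missing. First, the decomposition should not be around the full expectation $\mbb{E}[\wt{\mc{G}_{\mb{Q}}}(z)]$ with a deterministic bias $b_R(z)$; the paper centers at the \emph{conditional} expectation $\underline{\mc{G}_{\mb{Q}}}(z)=\mbb{E}_{\mb{\Psi}}[\wt{\mc{G}_{\mb{Q}}}(z)]$, which is still random through $\mb{P}=\mb{\Theta}\mb{\Theta}^{\dag}/S$, so that $G_R(z)=G_R^1(z)+G_R^2(z)$ with $G_R^2$ a function of $\mb{P}$ alone. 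With your centering, all of the $\mb{P}$-fluctuation is folded into $M_R(z)$ and your claim that the remainder ``merely shifts the mean'' is false --- the second layer contributes variance (the nonzero $\mc{R}(x,y)$ term), not just a mean shift. Second, the mechanism that makes the second layer tractable is an algebraic identity: subtracting the limiting self-consistent equation (\ref{eqGQ}) from its empirical-$\mb{P}$ version (\ref{eqGQT}) yields $G_R^2=S\big(\wt{\mc{G}_{\mb{P}}}(1/\mc{G}_{\mb{Q}})-\mc{G}_{\mb{P}}(1/\mc{G}_{\mb{Q}})\big)/C$, i.e.\ a centered linear spectral statistic of the Wishart matrix $\mb{P}$ evaluated along the deterministic curve $1/\mc{G}_{\mb{Q}}(z)$, to which Lemma~\ref{lemmaCovBai} applies a second time. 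Asymptotic independence of the two layers then follows because the conditional limiting covariance (\ref{eqCovGR1}) of $G_R^1$ does not depend on $\mb{P}$, while the randomness of $G_R^2$ comes only from $\mb{P}$; this is what lets the sum be Gaussian. None of this is supplied by your sketch.

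Your primary route --- a single martingale-difference decomposition exposing columns of $\mb{\Psi}$ and then $\mb{\Theta}$ --- is not wrong in principle, but it amounts to reproving the entire Bai--Silverstein machinery for the product ensemble from scratch: the Lindeberg verification and, especially, the computation showing the martingale bracket converges to $\mc{G}_{\mb{Q}}(x,y)$ of Proposition~\ref{propCauchy2} are substantial and are nowhere sketched. The paper's route avoids all of this by reducing to two black-box applications of the known Wishart CLT plus the two observations above; if you want a completable proof at the level of detail of the paper, you should adopt that reduction rather than defer it.
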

\begin{proof}
The proof of Proposition~\ref{propCLT} is in Appendix~\ref{appxCLT}.
\end{proof}
By Proposition~\ref{propCLT}, the asymptotic Gaussianity of (\ref{eqmuG}) follows from the fact that the Riemann sum corresponding to this integral has jointly Gaussian summands and the sum of which can only be Gaussian. Proposition~\ref{propCLT} generalizes the CLT of LSS for Wishart type random matrices involving one deterministic correlation matrix and one random matrix with \emph{i.i.d.} entries~\cite{Bai2004}. When both matrices $\mb{\Psi}$ and $\mb{\Theta}$ are random and independent, $G_R(z)$ can be decomposed into two random processes, see (\ref{eqGR}) in Appendix~\ref{appxCLT}. Both random processes are asymptotically Gaussian and each is governed by Lemma~\ref{lemmaCovBai}. As already discussed in Section~\ref{secVar}, the induced fluctuation of LSS is characterized by both the first order Cauchy transform and the second order \mbox{$R$-transform}. This is different from the Wishart random matrices, where the corresponding $G_R(z)$ only involves one asymptotic Gaussian process and the fluctuation of LSS is solely determined by the first order Cauchy transform. This makes the CLT of Rayleigh product ensembles distinct from the one in~\cite{Bai2004}. Together with the mean $\mu_{\mc{I}}$ and variance $\sigma_{\mc{I}}^2$ in Propositions~\ref{propMu} and~\ref{propVar}, an analytical Gaussian approximation to the capacity distribution of the Rayleigh product channel is available. This result can not be directly derived based on the existing results in~\cite{KamathTIT2005,HochwaldTIT2004,Tulino05,DM05,MSS03}. Note that the CLT of LSS for the biorthogonal ensembles, such as the Rayleigh product ensemble, was proved by Breuer and Duits~\cite{Breuer2013} for polynomial functions $\varphi(x)$. However, it is not clear how to extend this result to generic analytic functions $\varphi(x)$ such as the channel capacity $\varphi(x)=\log(1+\gamma x)$. Recently, the CLT for the product of two real and square random matrices was proved by G\"otze, Naumov, and Tikhomirov~\cite{GotzeProductReal2014} for smooth function $\varphi(x)$.

\begin{figure}[!t]
\centering
\includegraphics[width=4in]{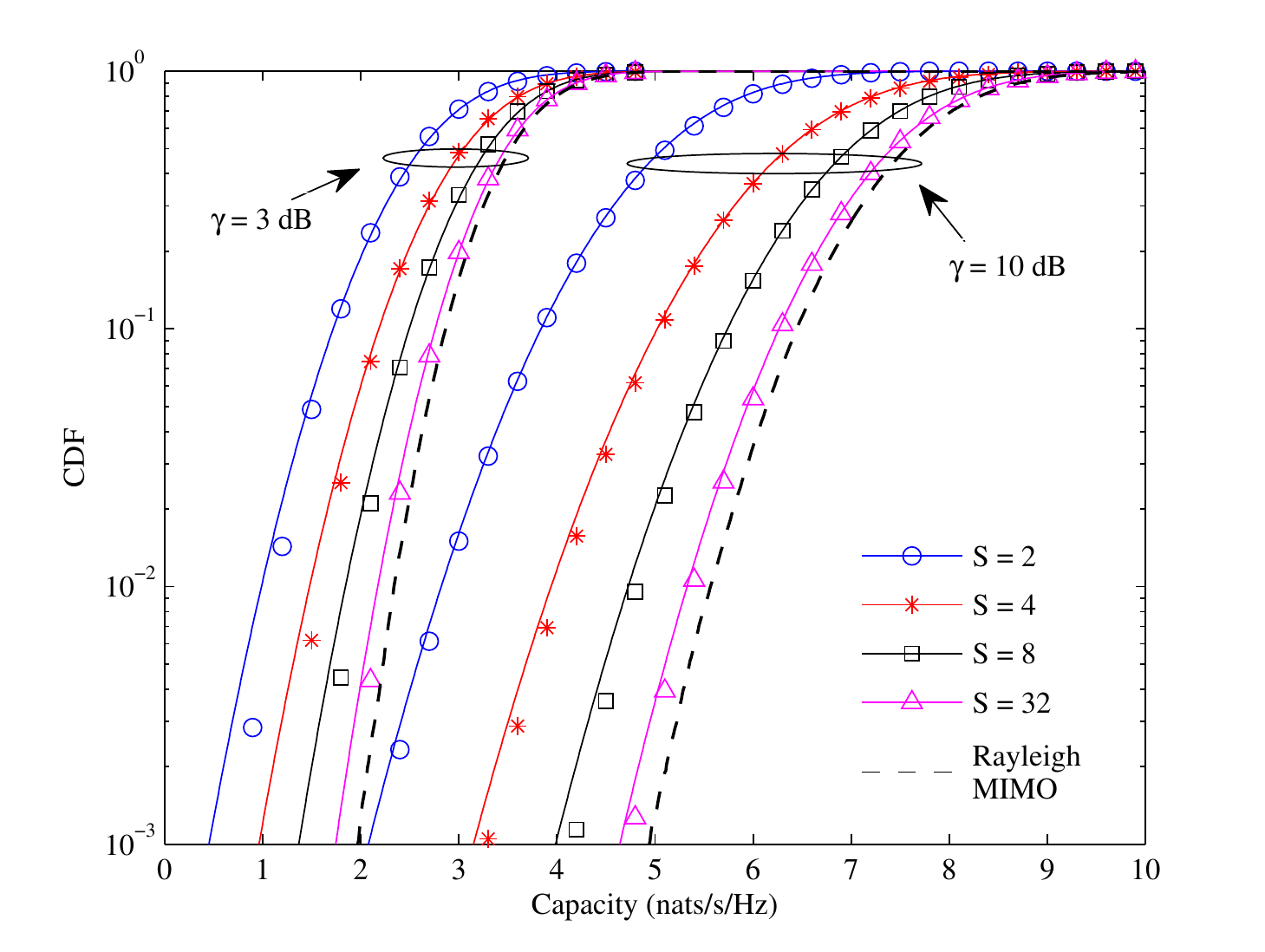}
\caption{CDF of channel capacity in the presence of $4\times 4$ Rayleigh product channel. Solid line: Gaussian approximation (\ref{eqPoutG}); markers: simulations; dashed line: CDF of $4\times 4$ MIMO capacity with independent Rayleigh fading.} \label{figPout}
\end{figure}

Let $\mathrm{erf}(x) = 2/\sqrt{\pi}\int_0^x e^{-t^2}\mathrm{d}t$ denote the error function, the Gaussian approximation to the CDF of channel capacity $\mc{I}$ is
\begin{equation}\label{eqPoutG}
F_{\mc{I}}(x) \approx \frac{1}{2}\left(1+\mathrm{erf}\left(\frac{x-\mu_{\mc{I}}}{\sigma_{\mc{I}}\sqrt{2}}\right)\right)
\end{equation}
and thus the outage capacity is
\begin{equation}\label{eqCoutG}
\mc{I}_{\mathrm{out}} \approx \mu_{\mc{I}} + \sigma_{\mc{I}}\sqrt{2}\ \mathrm{erf}^{-1}(2P_{\mathrm{out}}-1),\quad P_{\mathrm{out}}\in(0,1).
\end{equation}
Based on (\ref{eqPoutG}) and (\ref{eqCoutG}), the outage behavior of the Rayleigh product channel can be understood. In Fig.~\ref{figPout} the impact of the number of scatterers~$S$ as well as the received SNR $\gamma$ is studied, where a $4\times 4$ MIMO system is considered in the presence of $S = 2, 4, 8,\mbox{and } 32$ scattering objects. We plot the approximative outage probability as well as the empirical one obtained by Monte Carlo simulations. The outage probabilities are evaluated at SNRs $\gamma = 3$ and $\gamma = 10$~dB. As a comparison, we also plot the outage probability of a $4\times 4$ Rayleigh MIMO channel with independent fading entries. As the number of scatterers~$S$ increases, the outage capacity at a given probability level rapidly increases until $S$ is equal to the number of antennas, which is especially visible when SNR is large. In this range, the rank of the channel matrix is limited by the number of scatterers and increasing the scatterers effectively improves the rank of channel matrix. When $S>4$, the matrix rank is limited by the number of antennas and the improvement of outage capacity is relatively slow. Yet, the outage probability curve approaches to a limit, which corresponds to the outage probability of a conventional Rayleigh MIMO channel as predicted by~\cite{LL11}.

In Fig.~\ref{figCout} we examine the impact of number of antennas on the outage capacity. We plot the approximative $1\%$ outage capacity (\ref{eqCoutG}) as a function of received SNR $\gamma$. Assume the number of transmit and receive antennas $T = R = 2, 4, 8,\mbox{and }16$ while fixing the number of scatterers $S=8$. As expected, it is seen that the outage capacity of the Rayleigh product channel is lower than the conventional Rayleigh MIMO channel due to the presence of a finite number of scatterers. In the high SNR regime, the outage capacity curves of both channels attain the same slope when $T\le S$, which suggests that the capacity scales at the same rate as the limiting Rayleigh MIMO channel. On the other hand, when $S< T$ there is an increasing gap between the two channels as $\gamma$ increases. Finally, it is observed from Fig.~\ref{figPout} and~\ref{figCout} that the Gaussian approximation~(\ref{eqPoutG}) and (\ref{eqCoutG}) is reasonably accurate for a wide range of parameter settings.

\begin{figure}[!t]
\centering
\includegraphics[width=4in]{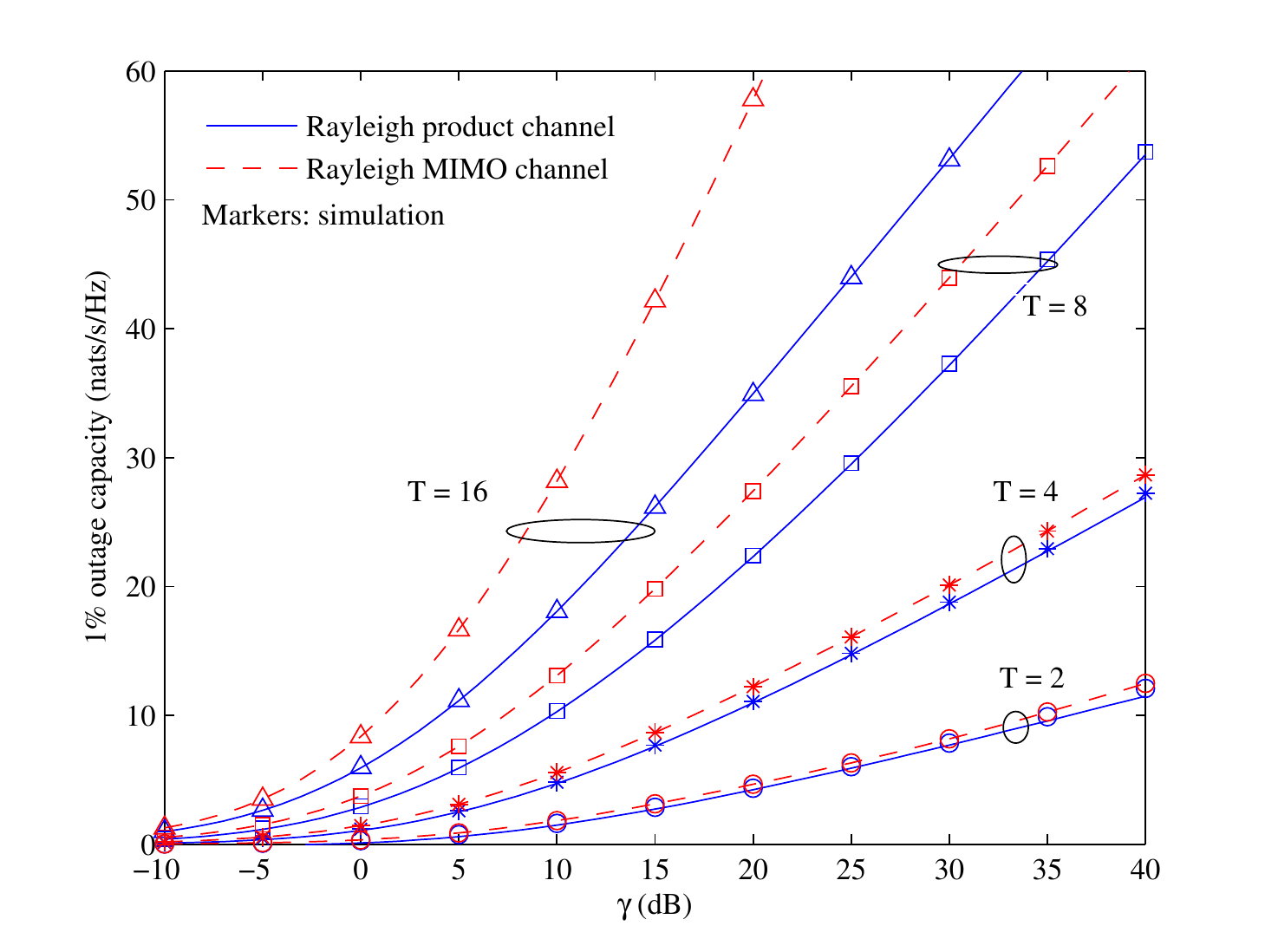}
\caption{$1\%$ outage capacity of Rayleigh product channels with $S=8$ scattering objects and equal numbers of antennas. Solid line: Gaussian quantile approximation~(\ref{eqCoutG}); markers: simulations; dashed line: outage capacity of conventional Rayleigh MIMO channels with equal number of antennas.} \label{figCout}
\end{figure}

\subsection{Finite-SNR Diversity-Multiplexing Tradeoff}

The concept of DMT was originally proposed in~\cite{ZhengTIT03} to characterize the diversity gain, which is related to link reliability, and the multiplexing gain, which is related to spectral efficiency. The DMT indicates that both types of performance gains can be obtained simultaneously while satisfying a fundamental tradeoff. The operational interpretation of the DMT framework is via the existence of universal codes, which are tradeoff optimal in the high SNR regime~\cite{TavildarTIT2006}.  In space-time code design~\cite{EliaTIT2006}, DMT represents a useful analytical tool to characterize the asymptotic performance of codes. However, the asymptotic tradeoff is a too optimistic upper bound to estimate the operational performance at realistic SNRs. Recent works have shown that codes optimized at high SNR may not be optimal at low or moderate SNR. Motivated by these facts, Narasimhan~\cite{NarasimhanTIT2006} proposed a finite-SNR DMT framework, which characterizes the non-asymptotic DMT. There, he studied the finite DMT for the correlated Rayleigh and Rician MIMO channels at realistic SNR levels.

Under the assumptions of slow fading and capacity achieving codes with rate $r$, the multiplexing gain $m$ of a MIMO channel is defined according to~\cite[Eq. (21)]{LoykaTIT2010} as
\begin{equation*}\label{eqMul}
m = \frac{n}{\mu_{\mc{I}}}r,
\end{equation*}
where $n = \min(R,S,T)$. The multiplexing gain  provides an indication of the sensitivity of rate adaptation strategy as the SNR changes. When the applied codes have a higher multiplexing gain, the rate adaptation tends to respond more dramatically to the SNR variations. At a fixed multiplexing gain, the finite-SNR diversity gain $d(m,\gamma)$ is defined as the negative slope of the log-log plot of outage probability $P_{\mathrm{out}}(r)$ at rate $r=m\,\mu_{\mc{I}}/n$ versus SNR $\gamma$,
\begin{equation}\label{eqDiv}
d(m,\gamma) = -\frac{\partial \log P_{\mathrm{out}}\left(m\,\mu_{\mc{I}}/n\right)}{\partial \log\gamma}.
\end{equation}
At a particular SNR $\gamma$ and multiplexing gain $m$, the diversity gain (\ref{eqDiv}) provides an estimate of the additional SNR needed to reduce the outage probability by a certain amount. Using the derived outage probability (\ref{eqPoutG}), the finite-SNR DMT can be obtained for the Rayleigh product channel.
\begin{proposition}\label{propDMT}
When $R = T$, the finite-SNR DMT of Rayleigh product channels can be approximated by
\begin{equation}\label{eqDivApprox}
d(m,\gamma) = \frac{2\gamma}{\sqrt{\pi}}\frac{\exp(-K(m,\gamma)^2/2)}{1+\mathrm{erf}(-K(m,\gamma))}\frac{\partial K(m,\gamma)}{\partial \gamma},
\end{equation}
where $K(m,\gamma) = \frac{n-m}{\sqrt{2}n}\frac{\mu_{\mc{I}}}{\sigma_{\mc{I}}}$ with $\mu_{\mc{I}}$ and $\sigma_{\mc{I}}^2$ calculated by Propositions~\ref{propMu} and \ref{propVar}.
\end{proposition}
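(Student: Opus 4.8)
The plan is to obtain (\ref{eqDivApprox}) by direct differentiation of the Gaussian outage approximation, taking as granted the CLT of Proposition~\ref{propCLT} (which justifies (\ref{eqPoutG})) and the closed forms for $\mu_{\mc{I}}$ and $\sigma_{\mc{I}}^2$ from Propositions~\ref{propMu} and~\ref{propVar}. Since those closed forms are available only under $R=T$, I restrict to that case throughout; the remainder of the argument is self-contained once (\ref{eqPoutG}) is in hand. First I would substitute the target rate $r=m\,\mu_{\mc{I}}/n$ into (\ref{eqPoutG}), giving
\begin{equation*}
P_{\mathrm{out}}\!\left(m\,\mu_{\mc{I}}/n\right) \approx \frac{1}{2}\left(1+\mathrm{erf}\!\left(\frac{m\mu_{\mc{I}}/n-\mu_{\mc{I}}}{\sigma_{\mc{I}}\sqrt{2}}\right)\right).
\end{equation*}
The key observation is that the error-function argument collapses to $-K(m,\gamma)$ with $K(m,\gamma)=\frac{n-m}{\sqrt{2}\,n}\frac{\mu_{\mc{I}}}{\sigma_{\mc{I}}}$, so that $P_{\mathrm{out}}\approx\tfrac12\bigl(1+\mathrm{erf}(-K(m,\gamma))\bigr)$, viewed as a function of $\gamma$ through the SNR dependence of $\mu_{\mc{I}}$ and $\sigma_{\mc{I}}$.

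Next I would evaluate the definition (\ref{eqDiv}). Writing $\partial/\partial\log\gamma=\gamma\,\partial/\partial\gamma$ and differentiating the logarithm of the displayed outage probability, the logarithm contributes the factor $1/\bigl(1+\mathrm{erf}(-K)\bigr)$, while the chain rule through the error function supplies a Gaussian density factor via $\tfrac{\mathrm{d}}{\mathrm{d}x}\mathrm{erf}(x)=\tfrac{2}{\sqrt{\pi}}e^{-x^2}$, together with the inner derivative $\partial K/\partial\gamma$. Collecting these factors and the overall sign prescribed by (\ref{eqDiv}) yields (\ref{eqDivApprox}), with the density evaluated at the argument $-K(m,\gamma)$ and the multiplexing-dependent term $\partial K(m,\gamma)/\partial\gamma$ retained in symbolic form.

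The only genuine subtlety is the SNR dependence hidden inside $K$. Both $\mu_{\mc{I}}$ and $\sigma_{\mc{I}}$ are specified implicitly through the cubic equations of Propositions~\ref{propMu} and~\ref{propVar} (the root $g$ and the root $\omega_r$, respectively), so $\partial K/\partial\gamma$ is itself a composite object that, if one insisted on an explicit form, would require implicit differentiation of those cubics in $\gamma$. Because Proposition~\ref{propDMT} reports the diversity gain with $\partial K/\partial\gamma$ left as a symbol, the proof reduces to the elementary composite differentiation above and no cubic need be differentiated. Accordingly I expect the main obstacle to be purely bookkeeping: tracking the sign conventions and confirming that the error-function argument is exactly $-K$, after which (\ref{eqDivApprox}) follows from the chain rule applied to (\ref{eqPoutG}) and (\ref{eqDiv}).
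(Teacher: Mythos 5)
Your proposal is essentially identical to the paper's proof, which consists of the single line ``substitute (\ref{eqPoutG}) into (\ref{eqDiv})'' followed by exactly the chain-rule computation you describe (the erf argument collapsing to $-K(m,\gamma)$, the factor $\gamma$ from $\partial/\partial\log\gamma$, and $\partial K/\partial\gamma$ left symbolic). One small point worth flagging: with $\mathrm{erf}'(x)=\tfrac{2}{\sqrt{\pi}}e^{-x^2}$, the computation you outline produces $\exp\bigl(-K(m,\gamma)^2\bigr)$ rather than the $\exp\bigl(-K(m,\gamma)^2/2\bigr)$ displayed in (\ref{eqDivApprox}), so the stated exponent appears to carry a typo (a leftover from the standard-normal-density convention) that your write-up silently absorbs when you assert the collected factors ``yield (\ref{eqDivApprox}).''
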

\begin{proof}
The proof of Proposition~\ref{propDMT} follows by substituting (\ref{eqPoutG}) into (\ref{eqDiv}).
\end{proof}
Note that the approximation (\ref{eqDivApprox}) is tight in the asymptotic regime~(\ref{eqDim}). This is because the approximation error is induced from (\ref{eqPoutG}).

Fig.~\ref{figDMTT2S2} shows the finite-SNR DMT of a $2\times 2$ Rayleigh product channel with  $S=2$ scatterers. The approximated tradeoff curves are generated by (\ref{eqDivApprox}) at SNRs $\gamma = 0$ dB and $\gamma = 5$ dB. Compared to the Monte Carlo simulations, the proposed approximation yield close estimate for the MIMO diversity gain. As $m$ approaches the maximum multiplexing gain, the discrepancies between the approximation and simulation curves decrease. When $R = T = 4$ antennas are used, the MIMO channel achieves improved channel diversity for a given multiplexing gain as shown in Fig.~\ref{figDMTT4S2}. In both figures, we have also plotted the asymptotic DMT of Rayleigh product channels according to~\cite[Eq. (8)]{Yang11}, when SNR $\gamma$ approaches infinity. It is clear that the asymptotic results significantly overestimate the channel diversity at the considered operational SNR levels, which justifies the usefulness of the proposed approximation~(\ref{eqDivApprox}).

\begin{figure}[!t]
\centerline{\includegraphics[width = 4in]{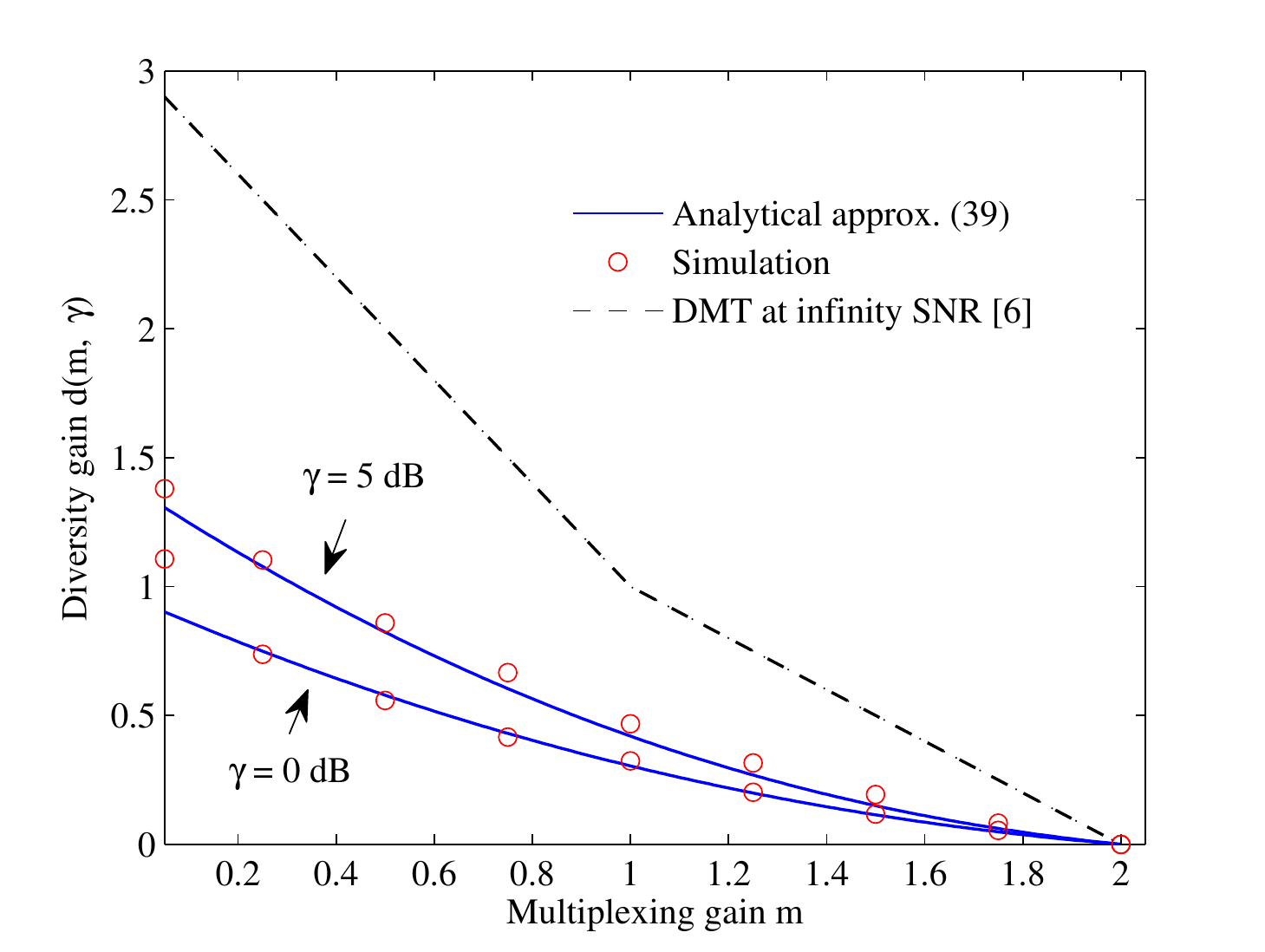}}
\caption{Finite-SNR DMT of $2\times 2$ Rayleigh product channel with $S=2$. Solid line: approximation calculated by (\ref{eqDivApprox}); markers: simulations; dashed line: asymptotic DMT with SNR $\gamma\rightarrow\infty$.}\label{figDMTT2S2}
\end{figure}

\begin{figure}[!t]
\centerline{\includegraphics[width= 4in]{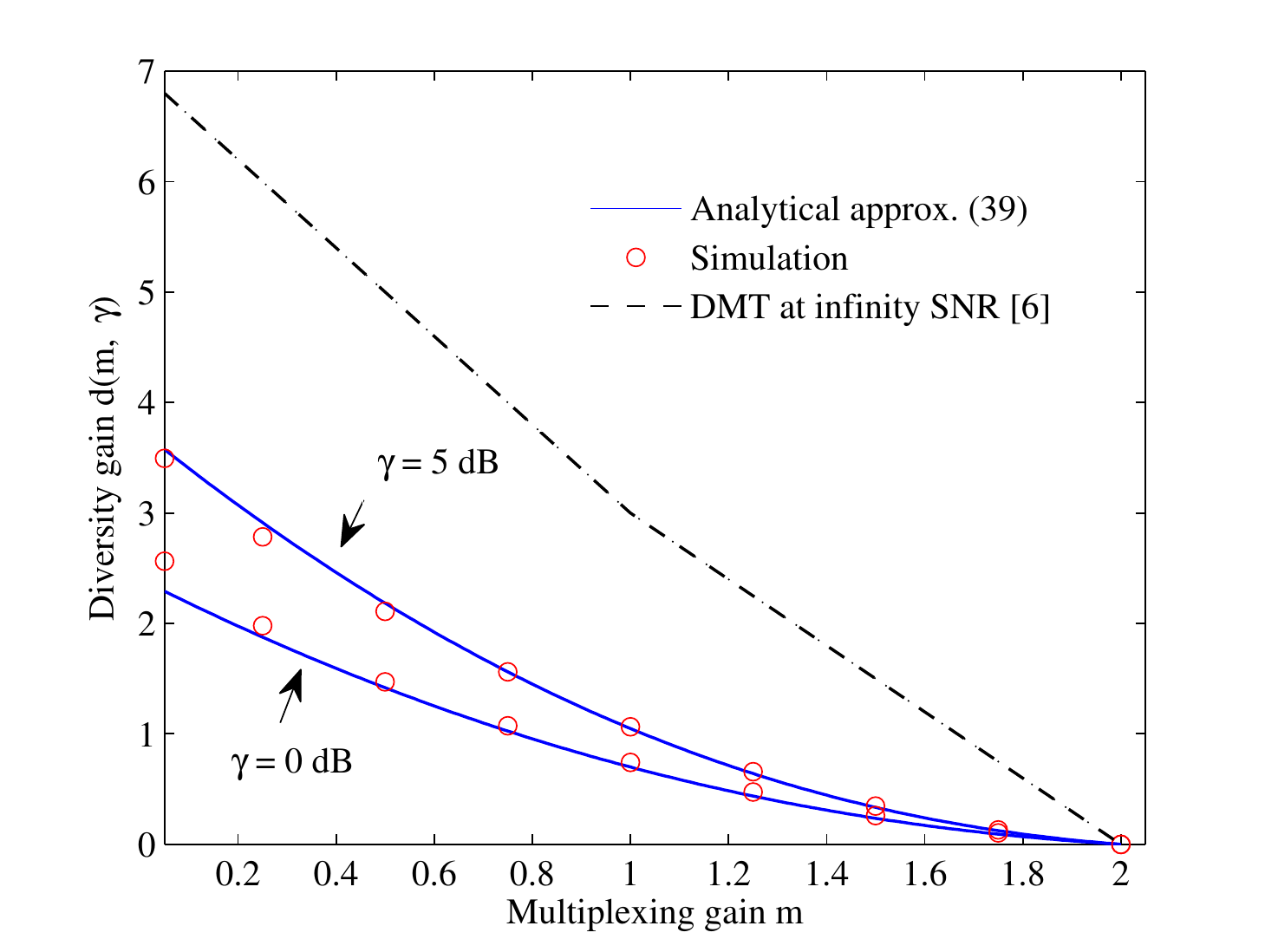}}
\caption{Finite-SNR DMT of $4\times 4$ Rayleigh product channel with $S=2$. Solid line: approximation calculated by (\ref{eqDivApprox}); markers: simulations; dashed line: asymptotic DMT with SNR $\gamma\rightarrow\infty$.}\label{figDMTT4S2}
\end{figure}

\section{Conclusions}\label{secConc}
We studied the outage probability of Rayleigh product channels, which explicitly model the rank deficiency effect. Using free probability theory, the asymptotic variance of channel capacity is calculated for large channel matrix and becomes exact when matrix dimensions approach infinity. Compared to the conventional Rayleigh MIMO channels, the Rayleigh product channels induce a higher capacity fluctuation, which is determined by the second order \mbox{$R$-transform} of the channel matrix. We have proved that the channel capacity is asymptotically Gaussian by establishing a CLT of a relevant linear spectral statistics.  Numerical results show that the proposed Gaussian approximation is reasonably accurate for realistic channel dimensions. Results have been utilized to characterize the tradeoff between diversity and multiplexing of Rayleigh product channels, while the asymptotic tradeoff for large SNR may be an over-optimistic estimate.

\section*{Acknowledgment}
Z. Zheng and J. H\"am\"al\"ainen are supported by the Academy of Finland (Grant 284634). L. Wei and J. Corander are supported by the Academy of Finland (Grant 251170). R. Speicher is supported by the ERC Advanced Grant (NCDFP 339760). R. M\"uller is supported by the Alexander von Humboldt Foundation. We would like to thank O. Arizmendi and A. Soshnikov for helpful discussions.

\begin{appendices}

\section{Proof of Proposition~\ref{propVar}}\label{appxAsyVar}

Let $\mc{S}_{\mb{Q}}^c\in\mbb{R}$ denote the complement of the support $\mc{S}_{\mb{Q}}$ on the real axis. It is shown in \cite{Silver1995} that for a given open interval $\mc{T}\subset\mc{S}_{\mb{Q}}^c$, the function $\mc{G}_{\mb{Q}}(\cdot)$ is continuous, real, and decreasing. This is also true for $\mc{G}_{\mb{P}}(\cdot)$ with $\mc{T}\subset\mc{S}_{\mb{P}}^c$. Therefore, there exists an inverse function $G^{-1}(t)=\mc{G}_{\mb{Q}}^{-1}\left(1/\mc{G}_{\mb{P}}^{-1}(t)\right)$ continuous, real, and decreasing over $\{t\in\mbb{R}: t = G(x), x\in\mc{S}_{G}^c\}$. We choose the contour $\mc{C}_x$ to be inside of $\mc{C}_y$ such that they both cross real-axis in the intervals $(-1/\gamma,0)$ and $(\lambda_r,\infty)$, where $\lambda_r$ denotes the right end-point of the support $\mc{S}_{\mb{Q}}$. By substitutions $t_1 = G(x)$ and $t_2 = G(y)$, the integral (\ref{eqVarc}) can be alternatively integrated over contours $\mc{C}_1$ and $\mc{C}_2$ as
\begin{equation}
\sigma_{\mc{I}}^2 = -\frac{1}{4\pi^2}\oiint_{\mc{C}_1,\mc{C}_2} \frac{\varphi\left(G^{-1}(t_1)\right)\varphi\left(G^{-1}(t_2)\right)}{(t_1-t_2)^2}\mathrm{d}t_1\mathrm{d}t_2 = \frac{1}{2\pi \imath}\oint_{\mc{C}_2}\varphi\left(G^{-1}(t_2)\right)\mc{K}_{\mathrm{inner}}(t_2)\mathrm{d}t_2,\label{eqVard}
\end{equation}
where
\begin{equation}
\mc{K}_\mathrm{inner}(t_2) = \frac{1}{2\pi \imath}\oint_{\mc{C}_1} \frac{\log(1+\gamma G^{-1}(t))}{(t-t_2)^2}\,\mathrm{d}t.\label{eqKinner}
\end{equation}
The transformed contours $\mc{C}_1$ and $\mc{C}_2$ cross the real-axis in the intervals $(G(0^-), G(-1/\gamma))=(-\infty, G(-1/\gamma))$ and $(G(\infty), G(\lambda_r)) = (0, G(\lambda_r))$, where
\begin{equation*}
G(0^-)=\lim_{x\rightarrow 0-} G(x), \quad G(\infty) = \lim_{x\rightarrow\infty} G(x).
\end{equation*}
The inverse function $\mc{G}_{\mb{P}}^{-1}(t)$ is calculated via (\ref{eqGp}) as
\begin{equation}
\mc{G}_{\mb{P}}^{-1}(t) = \frac{1}{t} + \frac{\zeta}{1-t}.\label{eqGpinv}
\end{equation}
We obtain the inverse function $\mc{G}_{\mb{Q}}^{-1}(t)$ by solving the quadratic equation (\ref{eqcubicGQ}) in $z$ as
\begin{equation}
\mc{G}_{\mb{Q}}^{-1}(t) = \frac{1-(1-\zeta)t\pm \sqrt{1+(1-\zeta)^2 t^2-2(1+\zeta)t}}{2\zeta t^2},\label{eqGqinv}
\end{equation}
where the minus sign is taken by the requirement $\lim\limits_{z\rightarrow\infty} z\mc{G}_{\mb{Q}}(z) = 1$~\cite{Silver1995}. Substituting (\ref{eqGpinv}) and (\ref{eqGqinv}) into~(\ref{eqKinner}) and applying integration by parts, we can rewrite $\mc{K}_{\mathrm{inner}}(t_2)$ as
\begin{align}
\mc{K}_\mathrm{inner}(t_2) &= \frac{1}{2\pi \imath}\oint_{\mc{C}_1} \frac{\gamma \left(G^{-1}(t)\right)'}{(t-t_2)\left(1+\gamma G^{-1}(t)\right)}\,\mathrm{d}t\nonumber\\
&= -\frac{\gamma}{2\pi \imath}\oint_{\mc{C}_1}\frac{2(\zeta-1)t^2+3t-1}{t(t-1)(t-t_2)(t-\omega_r)(t-\omega_{+})(t-\omega_{-})}\,\mathrm{d}t,\label{eqInner}
\end{align}
where $\omega_r$, $\omega_{+}$, and $\omega_{-}$ are the three roots of the cubic equation $t^3-2t^2+(1-\gamma+\gamma\zeta)t+\gamma=0$ and~$\omega_r$ denotes the real solution such that $\omega_r = G(-1/\gamma)<0$. The integrand of (\ref{eqInner}) has two simple poles at $t=0$ and $t=\omega_r$ in $\mc{C}_1$, and by applying the residue theorem, the integral $\mc{K}_{\mathrm{inner}}(t_2)$ becomes
\begin{equation}
\mc{K}_{\mathrm{inner}}(t_2) = \frac{1}{t_2}-\frac{1}{t_2-\omega_r}.\label{eqInnerb}
\end{equation}

Substituting (\ref{eqInnerb}) into (\ref{eqVard}), the variance $\sigma_\mc{I}^2$ can be therefore expressed as
\begin{align}
\sigma_\mc{I}^2 &= \frac{1}{2\pi \imath}\oint_{\mc{C}_2} \log\left(1+\gamma G^{-1}(t)\right)\left(\frac{1}{t}-\frac{1}{t-\omega_r}\right)\,\mathrm{d}t\nonumber\\
&= \frac{1}{2\pi \imath}\oint_{\mc{C}_2} \log\frac{(t-\omega_{+})(t-\omega_{-})}{(t-1)^2}\left(\frac{1}{t}-\frac{1}{t-\omega_r}\right)\,\mathrm{d}t + \frac{1}{2\pi \imath}\oint_{\mc{C}_2} \log\frac{t-\omega_r}{t}\left(\frac{1}{t}-\frac{1}{t-\omega_r}\right)\,\mathrm{d}t.\label{eqSigd}
\end{align}
The second integral in (\ref{eqSigd}) has an anti-derivative $\left(\log\frac{t-\omega_r}{t}\right)^2/2$, which is single-valued over $\mc{C}_2$ and therefore vanishes due to Cauchy's theorem. Applying the residue theorem to the first integral in (\ref{eqSigd}), we obtain
\begin{equation*}
\sigma_{\mathcal{I}}^2 = \log\frac{(\omega_r-1)^2\omega_{+}\omega_{-}}{(\omega_r-\omega_{+})(\omega_r-\omega_{-})}.
\end{equation*}
The proof is completed by the fact that $\omega_r\omega_{+}\omega_{-} = -\gamma$.

\section{Proof of Corollary~\ref{corollary1}}\label{appxCorollary1}
When $\zeta>1$, $u(\gamma,\zeta)$ and $u(\gamma,\zeta)^2$ can be expanded at $\gamma\rightarrow\infty$ as
\begin{align*}
u(\gamma,\zeta) &= \sqrt{3(\zeta-1)}\gamma^{1/2}-\frac{2\zeta+1}{2(\zeta-1)} + \frac{\sqrt{3}(4\zeta-1)}{8(\zeta-1)^{5/2}}\gamma^{-1/2} + \mc{O}(\gamma^{-1}),\\
u(\gamma,\zeta)^2 &= 3(\zeta-1)\gamma-\frac{\sqrt{3}(2\zeta+1)}{(\zeta-1)^{1/2}}\gamma^{1/2} + \mc{O}(1).
\end{align*}
The real and negative solution $\omega_r$ of (\ref{eqCubic}) corresponds to $t_1$ and it follows from (\ref{eqt1}) that
\begin{align}\label{eqwrr}
\omega_r = \frac{2}{3}-\frac{3\gamma\zeta-3\gamma-1-u(\gamma,\zeta)^2}{3u(\gamma,\zeta)} = \frac{1}{1-\zeta} + \mc{O}(\gamma^{-1/2}).
\end{align}

When $\zeta<1$, the asymptotic expansion of $u(\gamma,\zeta)$ at $\gamma\rightarrow\infty$ yields
\begin{equation}\label{equl}
u(\gamma,\zeta) = (-1)^{1/6}\sqrt{3(1-\zeta)}\gamma^{1/2} + \mc{O}(1) = e^{\frac{\imath\pi}{6}} \sqrt{3(1-\zeta)}\gamma^{1/2} + \mc{O}(1),
\end{equation}
where we took the principle value $(-1)^{1/6} = e^{\imath\pi/6}$. In this case, the real and negative solution $\omega_r$ of~(\ref{eqCubic}) corresponds to $t_2$. Inserting (\ref{equl}) into (\ref{eqt2}), we have
\begin{equation}\label{eqwrl}
\omega_r = \frac{2}{3}-\frac{1}{3}\sqrt{3(1-\zeta)\gamma}\left(e^{\frac{\imath\pi}{6}}+e^{-\frac{\imath\pi}{6}}\right) = \frac{2}{3}-\sqrt{(1-\zeta)\gamma}.
\end{equation}
Inserting (\ref{eqwrr}) into (\ref{eqVars}) and taking derivative with respect to $\zeta$, we obtain
\begin{align*}
\frac{\partial}{\partial \zeta}\sigma_{\mc{I}}^2 \approx - \frac{2\left((\zeta-1)^3\gamma-2\zeta^2+\zeta\right)}{\zeta(\zeta-1)\left((\zeta-1)^3 \gamma +2\zeta\right)}\approx -\frac{2}{\zeta(\zeta-1)} < 0.
\end{align*}
Therefore, when $\zeta>1$ the asymptotic variance $\sigma_{\mc{I}}^2$ is a monotonically decreasing function of $\zeta$. When $\zeta<1$, the variance $\sigma_{\mc{I}}^2$ is a monotonically increasing function of $\zeta$, where the derivative $\partial \sigma_{\mc{I}}^2/\partial \zeta > 0$ with $\omega_r$ given by (\ref{eqwrl}). To sum up, the asymptotic variance is maximum when $\zeta$ approach $1$ from both sides of the axis. This completes the proof of Corollary~\ref{corollary1}.

\section{Non-Crossing Permutations}\label{appxNC}
Let us introduce the main combinatorial objects, the non-crossing disc and annular permutations, and the related notations, which are used in Lemma~\ref{lemmaCumu} and in the proof of Proposition~\ref{propCauchy2}. We refer the readers to~\cite{NS06,MN04,SpeicherICM14} for a comprehensive description of the non-crossing permutations.

For a positive integer $n$, we denote the set $\{1,\ldots,n\}$ as $[n]$. Let $\mc{P}_n$ denote the set of all partitions of $[n]$. Given a partition $\pi\in\mc{P}_n$, we have $\pi=\{B_1,\ldots,B_k\}$, where $B_1,\ldots,B_k$, called blocks of $\pi$, are non-empty disjoint subset of $[n]$, i.e. $B_1\cup\cdots\cup B_k=[n], \mbox{ and }B_i \cap B_j=\emptyset\ \mbox{for}\ i\neq j$.
Given two partitions $\pi_1,\pi_2\in\mc{P}_n$, we have $\pi_1\le\pi_2$ if and only if every block of $\pi_1$ is contained in a block of $\pi_2$ and denote $1_{n} = \{1,\ldots,n\}$ the largest partition over  $[n]$. We say a partition $\pi$ is non-crossing in disc sense if there does not exist $1\le i,j\le k$, $i\neq j$, and $1\le a<b<c<d\le n$, such that $a,c\in B_i$ and $b,d\in B_j$. A non-crossing disc partition $\pi\in\mc{P}_n$ can be visualized as follows: draw the points $1,\ldots,n$ clockwise around the boundary of a disc and connect the points belonging to the same block with a convex hull. The partition $\pi$ is non-crossing if the convex hulls are pairwise disjoint.

A concept closely related to the partition is the set permutation. Let $\mc{S}_n$ denote the set of all permutations over $[n]$. Given a permutation $\tau\in\mc{S}_n$, we have $\tau = A_1\cdots A_k$ such that $[n]$ is decomposed into $k$ orbits and $A_i = (a_i(1),\ldots,a_i(s))$ is the $i$-th orbit of $\tau$ containing $s$ elements. For two elements $a_i(p),a_i(q)$ belong to the same orbit $A_i$, there exists an integer $m\ge 1$ such that $\tau^m(a_i(p))=a_i(q)$. For instance, if $\tau=(1,4,5)(2,3)\in\mc{S}_5$, it maps the elements as $\tau(1) = 4$, $\tau(4)=5$, $\tau(5)=1$, $\tau(2)=3$, and $\tau(3)=2$. The notion $\#(\tau)$ is used as the number of orbits of $\tau$. We say a permutation is standard in disc sense if for every orbit $A_i=(a_i(1),\ldots,a_i(s))$ of $\tau$, there is $a_i(1)<\cdots<a_i(s)$. A standard disc permutation~$\tau$ has an induced partition $\pi$, where each block of $\pi$ contains the same elements as the corresponding orbit of $\tau$. In addition, if the partition induced by standard permutation $\tau$ is non-crossing in disc sense, $\tau$ is a non-crossing disc permutation and we denote the set of all non-crossing disc permutation on $[n]$ as~$\mc{S}_{\operatorname{d-nc}}(n)$. Let $\eta = (1,\ldots,n)$ the forward cyclic permutation of $[n]$. A permutation $\tau\in\mc{S}_n$ satisfy the so-called geodesic condition as
\begin{equation}\label{eqDiscGeo}
\#(\tau) + \#(\tau^{-1}\eta) \le n+1,
\end{equation}
where $\tau^{-1}\eta$ or alternative $\tau^{-1}\circ\eta$ is the composite permutation by first applying $\eta$ and then $\tau^{-1}$. The equality in (\ref{eqDiscGeo}) only holds when $\tau$ is non-crossing disc permutation. The geodesic condition can be intuitively viewed as the triangular inequality for the Cayley graph of permutation group $\mc{S}_n$~\cite{Biane1997}. Let $\tau_1,\tau_2\in\mc{S}_n$, the distance between $\tau_1$ and $\tau_2$ in Cayley graph of $\mc{S}_n$ amounts to $d(\tau_1,\tau_2) = n-\#(\tau_1^{-1}\tau_2)$. The inequality (\ref{eqDiscGeo}) can be rewritten in terms of Cayley distance as $d(id,\tau) + d(\tau,\eta) \ge d(id,\eta)$, where $id$ is the identity permutation. The condition that permutation $\tau$ is non-crossing is equivalent that $\tau$ lies on the geodesic connecting $id$ and $\eta$ in the Cayley graph.

Let us consider another set of permutations $\mc{S}_{m+n}$, illustrated via topological drawing in the $(m,n)$-annular sense. Instead of placing $m+n$ points on the boundary of one disc, we will use two concentric circles. The points $1,\ldots,m$ are placed clockwise on the external circle and the points $m+1,\ldots,m+n$ are placed counter-clockwise on the internal circle. The annulus between the two circles are referred to as $(m,n)$-annulus. Given a permutation $\tau\in\mc{S}_{m+n}$, it is visualized by drawing curves within the $(m,n)$-annulus, which connect the elements of each orbit, respectively.  Let $A=(a(1),\ldots,a(s))$ an orbit of $\tau$ with $s$ elements. The corresponding curve connects $a(1)$ to $a(2)$, then $a(2)$ to $a(3)$, \ldots, then $a(s)$ to $a(1)$ such that: 1) it does not intersect with itself; 2) it encloses a region completely contained in $(m,n)$-annulus; 3) it goes clockwise around the region. We say a permutation $\tau$ is $(m,n)$-connected if there is at least one orbit of $\tau$ contains elements on both circles, otherwise $\tau$ is $(m,n)$-disconnected. In addition, $\tau\in\mc{S}_{m+n}$ is standard in $(m,n)$-annular sense if each orbit of $\tau$ satisfies either of the following conditions:
\begin{enumerate}
\item Given an orbit $A\subseteq\tau$ such that $A\subseteq\{1,\ldots,m\}$ or $A\subseteq\{m+1,\ldots,m+n\}$. The elements of $A$, upon cyclic permutations, can be sorted in increasing order.
\item $A\cap\{1,\ldots,m\}\neq\emptyset$ and $A\cap\{m+1,\ldots,m+n\}\neq\emptyset$. We have $A=(a(1),\ldots,a(k),b(1),\ldots,b(l))$, where $a(1),\ldots,a(k)\in\{1,\ldots,m\}$ and $b(1),\ldots,b(l)\in\{m+1,\ldots,m+n\}$. Both sequences $\{a(i)\}$ and $\{b(j)\}$, upon cyclic permutations, can be sorted in increasing order, respectively.
\end{enumerate}
We say a permutation $\tau\in\mc{S}_{m+n}$ is non-crossing in $(m,n)$-annular sense if $\tau$ is standard and the regions enclosed by every orbits of $\tau$ are not overlapping in the annular visualization described above. We denote~$\mc{S}_{\operatorname{a-nc}}(m,n)$ as the set of non-crossing $(m,n)$-annular permutations. Finally, according to~\cite[Th. 6.1]{MN04}, a permutation $\tau\in\mc{S}_{m+n}$ and $(m,n)$-connected satisfies a geodesic condition in the $(m,n)$-annular sense as
\begin{equation}\label{eqAnnGeo}
\#(\tau) + \#(\tau^{-1}\eta_0)\le m+n,
\end{equation}
where $\eta_0 = (1,\ldots,m)(m+1,\ldots,m+n)$ and the equality only holds when $\tau\in\mc{S}_{\operatorname{a-nc}}(m,n)$.

\section{Proof of Lemma \ref{lemmaCumu}}\label{appxLemmaCumu}
The proof relies on a known combinatorial identity of the moments of Gaussian random variables, which is stated below.
\begin{lemma}(Wick's Lemma~\cite{MN04}).\label{lemmaWick}
Let $Z_1,\ldots,Z_t$ denote \emph{i.i.d.} complex Gaussian random variables with zero mean and unit variance.
\begin{enumerate}
\item Let $m,n$ be positive integers such that $m\neq n$, and consider two functions $\alpha: [m]\rightarrow [t]$ and $\beta:[n]\rightarrow [t]$. Then
\begin{equation*}
\mbb{E}\left[Z_{\alpha(1)}\cdots Z_{\alpha(m)}\overbar{Z}_{\beta(1)}\cdots\overbar{Z}_{\beta(n)}\right]=0.
\end{equation*}
\item Let $n$ be a positive integer and consider two functions $\alpha,\beta:[n]\rightarrow [t]$. Then
\begin{equation}\label{eqWick}
\mbb{E}\left[Z_{\alpha(1)}\cdots Z_{\alpha(n)}\overbar{Z}_{\beta(1)}\cdots\overbar{Z}_{\beta(n)}\right]=\mbox{card}\left\{\tau\in\mc{S}_n|\alpha=\beta\circ\tau\right\},
\end{equation}
where $\mbox{card}(\cdot)$ denotes the cardinality.
\end{enumerate}
\end{lemma}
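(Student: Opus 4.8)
The plan is to deduce both parts from the complex version of Wick's (Isserlis) theorem, which expresses the expectation of a product of jointly centered complex Gaussian factors as a sum over all pair partitions (pairings) of the factors, each pairing contributing the product of the pairwise expectations of its matched factors. The only facts about a standard complex Gaussian $Z$ that I need are the second moments $\mbb{E}[Z]=\mbb{E}[\overbar{Z}]=0$, $\mbb{E}[Z^2]=\mbb{E}[\,\overbar{Z}^2]=0$, and $\mbb{E}[|Z|^2]=1$; combined with independence of distinct $Z_j$, these give $\mbb{E}[Z_iZ_j]=\mbb{E}[\,\overbar{Z}_i\overbar{Z}_j]=0$ for all $i,j$ (the diagonal terms vanish by circular symmetry, the off-diagonal ones by independence and zero mean) and $\mbb{E}[Z_i\overbar{Z}_j]=\delta_{ij}$.

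To establish the complex Wick theorem itself I would take one of two standard routes. The first reduces to the real case: write $Z_j=(X_j+\imath Y_j)/\sqrt{2}$ with $X_j,Y_j$ independent real standard Gaussians, expand the product, apply the classical real Isserlis formula, and check that the complex covariance structure collapses the pairings as claimed. The second is a direct induction on the number of factors via Gaussian integration by parts (Stein's identity), which peels off one contraction of the form $\mbb{E}[\overbar{Z}_k Z_l]$ at a time and lowers the number of factors by two. Either works; the induction is the more self-contained.

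Given the Wick expansion, the key observation is that any pairing matching two $Z$-factors, or two $\overbar{Z}$-factors, contributes a vanishing factor, since $\mbb{E}[Z_iZ_j]=\mbb{E}[\,\overbar{Z}_i\overbar{Z}_j]=0$. Hence only pairings matching each $Z$-factor with a $\overbar{Z}$-factor survive. For part~(1), such a complete $Z$--$\overbar{Z}$ matching exists only when the number $m$ of $Z$-factors equals the number $n$ of $\overbar{Z}$-factors; when $m\neq n$ every pairing contains at least one vanishing like-type contraction, so the expectation is $0$.

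For part~(2) with $m=n$, the surviving pairings are in bijection with permutations $\tau\in\mc{S}_n$, where $\tau$ matches the factor $Z_{\alpha(i)}$ with $\overbar{Z}_{\beta(\tau(i))}$. Such a pairing evaluates to $\prod_{i=1}^n \mbb{E}\big[Z_{\alpha(i)}\overbar{Z}_{\beta(\tau(i))}\big]=\prod_{i=1}^n \delta_{\alpha(i),\beta(\tau(i))}$, which equals $1$ precisely when $\alpha(i)=\beta(\tau(i))$ for every $i$, i.e. $\alpha=\beta\circ\tau$, and $0$ otherwise; summing over $\tau$ yields $\mathrm{card}\{\tau\in\mc{S}_n:\alpha=\beta\circ\tau\}$, which is~(\ref{eqWick}). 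The main obstacle is not this final bookkeeping but setting up the complex Wick theorem cleanly and verifying that precisely the like-type contractions vanish. Once that collapse is in hand, the identification of surviving pairings with $\mc{S}_n$ and the evaluation of each as a product of Kronecker deltas is routine.
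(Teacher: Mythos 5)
The paper does not prove this lemma at all: it is quoted verbatim as a known result from Mingo and Nica \cite{MN04} and used as a black box in Appendix~D, so there is no internal proof to compare against. Your self-contained derivation is correct and is the standard one. The reduction of both parts to the complex Isserlis/Wick expansion is sound: the only inputs are $\mbb{E}[Z_iZ_j]=\mbb{E}[\overbar{Z}_i\overbar{Z}_j]=0$ (diagonal terms by circular symmetry, i.e.\ $\mbb{E}[Z^2]=0$ for $Z=(X+\imath Y)/\sqrt{2}$, off-diagonal by independence) and $\mbb{E}[Z_i\overbar{Z}_j]=\delta_{ij}$, after which only the $Z$--$\overbar{Z}$ contractions survive, part~(1) follows because no complete cross-matching exists when $m\neq n$, and the surviving pairings in part~(2) are exactly the permutations $\tau$ with contribution $\prod_i\delta_{\alpha(i),\beta(\tau(i))}$. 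Two small points you could tighten: in part~(1) the subcase $m+n$ odd is not covered by the ``every pairing has a like-type contraction'' argument but by the fact that the Isserlis sum over pairings is empty (equivalently, odd moments of a centered Gaussian vector vanish); and if you take the integration-by-parts route to the complex Wick theorem, you should state the complex Stein identity you are inducting with, since that is where the whole content lives. Neither affects correctness.
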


Denote $\mb{Q}=(q_{ij})_{i,j=1}^{R}$ with the entry $q_{ij}$ given by
\begin{equation}\label{eqQij}
q_{ij} = \frac{1}{R S}\sum_{a=1}^{S}\sum_{b=1}^{T}\sum_{c=1}^{S} \bar{\psi}_{ai}\theta_{ab}\bar{\theta}_{cb}\psi_{cj}.
\end{equation}
In light of \cite[Th. 2.12]{CMSS07}, the second order free cumulant of $\mb{Q}$ can be expressed in terms of classic joint cumulants of entries $q_{ij}$ as
\begin{equation}\label{eqCumu2b}
\kappa_{m,n}=\lim_{R\rightarrow\infty}R^{m+n} k_{m+n}\left(\mb{q}_{m,n}\right),
\end{equation}
where the vector $\mb{q}_{m,n}=[q_{i(1)i(2)},q_{i(2)i(3)},\ldots,q_{i(m)i(1)},q_{i(m+1)i(m+2)},q_{i(m+2)i(m+3)},\ldots,q_{i(m+n)i(m+1)}]$ can be any distinct choice of $i(1),\ldots, i(m+n)$. For a partition $\pi\in\mc{P}_n$, we define $\pi_i=\{\pi_i(1),\ldots,\pi_i(s)\}\subset\pi$ as a block of $\pi$ with $s$ elements. The expectation over the blocks of partition $\pi$ is defined as
\begin{equation*}
\mbb{E}_{\pi}[a_1,\ldots,a_n] = \prod_{\pi_i\subset\pi} \mbb{E}\left[a_{\pi_i(1)}\cdots a_{\pi_i(s)}\right].
\end{equation*}
Using the cumulant-moment relations~\cite[Eq. (10)]{CMSS07}, $k_{m+n}$ can be written as a sum of $\mbb{E}_{\pi}[\mb{q}_{m,n}]$ for all $\pi\in\mc{P}_{m+n}$, namely
\begin{align}\label{eqkmn}
k_{m+n}(\mb{q}_{m,n}) = \sum_{\pi\in\mc{P}_{m+n}}\mbb{E}_{\pi}[\mb{q}_{m,n}]\,\mbox{M\"ob}_{\mc{P}_{m+n}}(\pi,1_{m+n}),
\end{align}
where $\mbox{M\"ob}_{\mc{P}_{m+n}}:\mc{P}_{m+n}\times\mc{P}_{m+n}\rightarrow\mbb{C}$ denotes the \mbox{M\"obius} function~\cite{NS06} on $\mc{P}_{m+n}$, which satisfies
\begin{align}\label{eqMobius}
\sum_{\genfrac{}{}{0pt}{}{\eta\in\mc{P}_{m+n}}{\pi\le\eta}} \mbox{M\"ob}_{\mc{P}_{m+n}}(\eta,1_{m+n}) =
	\begin{cases}
		1 & \mbox{if   } \pi = 1_{m+n}\\
		0 & \mbox{otherwise} .
	\end{cases}
\end{align}

Inserting (\ref{eqQij}) into (\ref{eqkmn}) and applying Lemma \ref{lemmaWick}, we see that for a given partition $\pi$ the multiplicative moment $\mbb{E}_{\pi}[\mb{q}_{m,n}]$ is non-zero only when the partition $\pi$ takes the forms $\pi^{(1)}=1_{m+n}$, $\pi^{(2)}=\{\{1,\ldots,m\},\{m+1,\ldots,m+n\}\}$. The corresponding \mbox{M\"obius} function  can be calculated via (\ref{eqMobius}) as
\begin{equation}\label{eqMobius2}
\mbox{M\"ob}_{\mc{P}_{m+n}}\left(\pi^{(1)},1_{m+n}\right)=1,\quad \mbox{M\"ob}_{\mc{P}_{m+n}}\left(\pi^{(2)},1_{m+n}\right)=-1.
\end{equation}
It follows from (\ref{eqkmn}) and (\ref{eqMobius2}) that the cumulant $k_{m+n}(\mb{q}_{m,n})$ equals $k_{m+n}(\mb{q}_{m,n})=\mbb{E}_{\pi^{(1)}}[\mb{q}_{m,n}]-\mbb{E}_{\pi^{(2)}}[\mb{q}_{m,n}]$.

Denote the permutation $\eta_0 = (1,\ldots,m)(m+1,\ldots,m+n)\in\mc{S}_{m+n}$. Substituting $q_{i(t)i(\eta_0(t))}$, $t=1,\ldots,m+n$, into (\ref{eqQij}), we can express $\mbb{E}_{\pi^{(1)}}[\mb{q}_{m,n}]$ as
\begin{align}\label{eqPi1}
\frac{1}{(R S)^{m+n}}\sum_{\substack{1\le a_1,\ldots,a_{m+n}\le S\\ 1\le b_1,\ldots,b_{m+n}\le T\\1\le c_1,\ldots,c_{m+n}\le S}}\mbb{E}\left[\prod_{t=1}^{m+n}\bar{\psi}_{a_t i(t)}\theta_{a_t b_t}\bar{\theta}_{c_t b_t}\psi_{c_t i(\eta_0(t))}\right].
\end{align}
It is convenient to introduce the functions $A:[m+n]\rightarrow [S]$, $B:[m+n]\rightarrow [T]$, $C:[m+n]\rightarrow [S]$. Due to the independence between $\psi_{ij}$ and $\theta_{kl}$, we can rewrite (\ref{eqPi1}) as
\begin{align}\label{eqPi1b}
\frac{1}{(R S)^{m+n}}&\sum_{A,B,C}\mbb{E}\left[\prod_{t=1}^{m+n}\theta_{A(t)B(t)}\prod_{t=1}^{m+n}\bar{\theta}_{C(t)B(t)}\right]\,\mbb{E}\left[\prod_{t=1}^{m+n} \psi_{C\left(\eta_0^{-1}(t)\right)i(t)}\prod_{t=1}^{m+n}\bar{\psi}_{A(t)i(t)}\right].
\end{align}
Since the indexes $i(1),\ldots,i(m+n)$ are distinct, by Lemma \ref{lemmaWick} the second expectation in (\ref{eqPi1b}) is non-zero only when $C\circ\eta_0^{-1}=A$. The first expectation in (\ref{eqPi1b}) is calculated by (\ref{eqWick}) with $\alpha(t) = (A(t),B(t))$ and $\beta(t) = (C(t), B(t))$. For a given permutation $\tau\in\mc{S}_{m+n}$, the summands of (\ref{eqPi1b}) should fulfill $A = C\circ\tau$ and $B = B\circ\tau$ to be able to contribute to the summation. To summerize, $\mbb{E}_{\pi^{(1)}}[\mb{q}_{m,n}]$ is expressed as
\begin{align}
\mbb{E}_{\pi^{(1)}}[\mb{q}_{m,n}]=\frac{1}{(R S)^{m+n}}\sum_{B,C}\mbox{card}\left\{\tau\in\mc{S}_{m+n}\mid C\circ(\tau^{-1}\eta_0)=C, B\circ\tau=B\right\}.\label{eqProdMoment1a}
\end{align}
Interchange summation and cardinality operations in (\ref{eqProdMoment1a}) and write (\ref{eqProdMoment1a}) as a sum over the permutation~$\tau$,
\begin{equation}\label{eqProdMoment1b}
\mbb{E}_{\pi^{(1)}}[\mb{q}_{m,n}]=\frac{1}{(R S)^{m+n}}\sum_{\tau\in\mc{S}_{m+n}}\mbox{card}\left\{(B,C)\mid C\circ(\tau^{-1}\eta_0)=C, B\circ\tau=B\right\}.
\end{equation}
The condition $C\circ(\tau^{-1}\eta_0)=C$ is equivalent to requiring $C$ to be constant on the orbits of $\tau^{-1}\eta_0$. For a given permutation $\tau^{-1}\eta_0$, there are $S^{\#(\tau^{-1}\eta_0)}$ ways to choose indexes $C$. Similarly, the condition $B=B\circ\tau$ is equivalent to requiring $B$ to be constant on the orbits of $\tau$ and there are $T^{\#(\tau)}$ ways to choose indexes B. As a result, (\ref{eqProdMoment1b}) equals
\begin{equation}\label{eqProdMoment1c}
\mbb{E}_{\pi^{(1)}}[\mb{q}_{m,n}]=\frac{1}{(R S)^{m+n}}\sum_{\tau\in\mc{S}_{m+n}}S^{\#(\tau^{-1}\eta_0)} T^{\#(\tau)}.
\end{equation}

Following the same procedures as in (\ref{eqPi1})-(\ref{eqProdMoment1c}), we obtain $\mbb{E}_{\pi_1^{(2)}}[\mb{q}_{m,n}]$ and $\mbb{E}_{\pi_2^{(2)}}[\mb{q}_{m,n}]$ as
\begin{align}
\mbb{E}_{\pi_1^{(2)}}[\mb{q}_{m,n}] &= \frac{1}{(R S)^m} \sum_{\tau_1\in\mc{S}_m} S^{\#(\tau_1^{-1}\eta_1)} T^{\#(\tau_1)},\label{eqProdMoment2}\\
\mbb{E}_{\pi_2^{(2)}}[\mb{q}_{m,n}] &= \frac{1}{(R S)^n} \sum_{\tau_2\in\mc{S}_n} S^{\#(\tau_2^{-1}\eta_2)} T^{\#(\tau_2)},\label{eqProdMoment3}
\end{align}
where the permutations $\eta_1=(1,\ldots,m)\in\mc{S}_m$ and $\eta_2=(1,\ldots,n)\in\mc{S}_n$. We multiply (\ref{eqProdMoment2}) with (\ref{eqProdMoment3}) and combine the permutations $\tau_1$ and $\tau_2$ to form a new permutation $\tau = \tau_1\circ\tau_2'\in\mc{S}_{m+n}$, where $\tau_2'$ is homogeneous to $\tau_2$ with $i$-th element relabeled as $m+i$. Note that $\pi_{\tau}\le\pi_{\eta_0}$, where partitions $\pi_{\tau}$ and $\pi_{\eta_0}$ are induced by $\tau$ and $\eta_0$, respectively. The new permutation $\tau$ is therefore $(m,n)$-disconnected, namely
\begin{equation*}\label{eqProdMoment4}
\mbb{E}_{\pi_1^{(2)}}[\mb{q}_{m,n}]\,\mbb{E}_{\pi_2^{(2)}}[\mb{q}_{m,n}] = \frac{1}{(R S)^{m+n}}\sum_{\substack{\tau\in\mc{S}_{m+n},\\ (m,n)\operatorname{-disconnected}}}S^{\#(\tau^{-1}\eta_0)} T^{\#(\tau)}.
\end{equation*}
Inserting $k_{m+n}(\mb{q}_{m,n})=\mbb{E}_{\pi^{(1)}}[\mb{q}_{m,n}]-\mbb{E}_{\pi^{(2)}}[\mb{q}_{m,n}]$ into (\ref{eqCumu2b}), we obtain
\begin{align*}
\kappa_{m,n} &= \lim_{R\rightarrow\infty}\frac{1}{S^{m+n}}\sum_{\substack{\tau\in\mc{S}_{m+n},\\ (m,n)\operatorname{-connected}}}S^{\#(\tau^{-1}\eta_0)+\#(\tau)} \left(\frac{T}{S}\right)^{\#(\tau)}.\label{eq2ndCumud}
\end{align*}
According to~(\ref{eqAnnGeo}), the exponent $\#(\tau)+\#(\tau^{-1}\eta_0)\le m+n$ for $\tau\in\mc{S}_{m+n}$ and is $(m,n)$-connected. In addition, the equality holds only when $\tau$ is non-crossing in the $(m,n)$-annular sense. Let $S\rightarrow\infty$, all terms in the summation with crossing permutation vanish and for $\tau\in\mc{S}_{\operatorname{a-nc}}(m,n)$, $S^{\#(\tau^{-1}\eta_0)+\#(\tau)}$ cancels with $S^{m+n}$. The derivation for the first order cumulants follows similarly.

\section{Proof of Proposition~\ref{propCLT}}\label{appxCLT}

Denote $\underline{\mc{G}_{\mb{Q}}}(z) = \mbb{E}_{\mb{\Psi}}\left[\widetilde{\mc{G}_{\mb{Q}}}(z)\right]$ as the expected resolvent of $\mb{Q}$, which is averaged over the ensembles of~$\mb{\Psi}$. As matrix $\mb{P}$ is random, $\underline{\mc{G}_{\mb{Q}}}$ is also a random variable and is the solution of (\ref{eqGQ}) with $F_{\mb{P}}(\cdot)$ replaced by its empirical version $\widetilde{F_{\mb{P}}}(\cdot)$, namely
\begin{equation}
z = \frac{1}{\underline{\mc{G}_{\mb{Q}}}} + \rho\int\frac{\lambda\mathrm{d}\widetilde{F_{\mb{P}}}(\lambda)}{1-\lambda \underline{\mc{G}_{\mb{Q}}}}.\label{eqGQT}
\end{equation}
We divide $G_R(z)$ into two parts as
\begin{equation}\label{eqGR}
G_R(z) = R\left(\widetilde{\mc{G}_\mb{Q}}(z)-\underline{\mc{G}_{\mb{Q}}}(z)\right) + R\left(\underline{\mc{G}_{\mb{Q}}}(z)-\mc{G}_{\mb{Q}}(z)\right) = G_R^1(z) + G_R^2(z).
\end{equation}
The proof of asymptotic Gaussianity of $G_R(z)$ follows in two steps, showing the asymptotic Gaussianity of $G_R^1(z)$ and $G_R^2(z)$, respectively. The proof then boils down to a direct application of Bai and Silverstein's lemma~\cite[Lemma 1.1]{Bai2004}:
\begin{lemma}(CLT of Wishart type ensembles~\cite{Bai2004}).\label{lemmaCovBai}
Consider an $N\times N$ Hermitian matrix $\mb{B} = \mb{X}^{\dag}\mb{T}\mb{X}/N$ and assume:
\begin{enumerate}
\item $\mb{X}$ is an $n\times N$ complex random matrix with \emph{i.i.d.} entries, $\mbb{E}[X_{i,j}] = 0$, $\mbb{E}[|X_{i,j}|^2] = 1$, and $\mbb{E}[|X_{i,j}|^4] = 2$;
\item $\mb{T}$ is a non-random Hermitian nonnegative definite matrix and its ESD $\widetilde{F_{\mb{T}}}(\cdot)$ converges weakly to a non-random limiting distribution $F_{\mb{T}}(\cdot)$.
\end{enumerate}
Let $M_n(z) = n\left(\widetilde{\mc{G}_{\mb{B}}}(z)-\mc{G}_{\mb{B}}(z)\right)$ and $\mc{C}_z$ a positive contour enclosing the support of $\mb{B}$, then the sequence $\left\{M_n(z)\right\}$ is tight on the contour $\mc{C}_z$, and $M_n(z)$ converges weakly to a Gaussian process on the complex plane with $\mbb{E}\left[M_n(z)\right] = 0$ and $\mathrm{Cov}(M_n(z_1),M_n(z_2))$ given by (\ref{eqGB}).
\end{lemma}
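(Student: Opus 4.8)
The statement is precisely the Bai--Silverstein central limit theorem for Wishart-type ensembles, quoted here from~\cite{Bai2004}, so the plan is to reproduce their martingale argument. Since $M_n(z)$ is an analytic random function on a neighbourhood of the contour $\mc{C}_z$, weak convergence to a Gaussian process reduces to two ingredients: (i) convergence of every finite-dimensional vector $\big(M_n(z_1),\ldots,M_n(z_k)\big)$ to a complex Gaussian vector, and (ii) tightness of $\{M_n(z)\}_{z\in\mc{C}_z}$, working in the regime $n,N\to\infty$ with $n/N$ bounded. For (i) I would first decompose
\begin{equation*}
M_n(z) = \underbrace{n\big(\wt{\mc{G}_{\mb{B}}}(z)-\mbb{E}\,\wt{\mc{G}_{\mb{B}}}(z)\big)}_{M_n^{c}(z)} + \underbrace{n\big(\mbb{E}\,\wt{\mc{G}_{\mb{B}}}(z)-\mc{G}_{\mb{B}}(z)\big)}_{M_n^{d}(z)},
\end{equation*}
and treat the centred random fluctuation $M_n^{c}$ and the deterministic mean correction $M_n^{d}$ separately.

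For the fluctuation $M_n^{c}$, I would introduce the filtration generated by successively revealing the independent columns $\mb{x}_1,\ldots,\mb{x}_N$ of $\mb{X}$ and write $M_n^{c}(z)=\sum_{j}(\mbb{E}_j-\mbb{E}_{j-1})\,\mathrm{Tr}(\mb{B}-z)^{-1}$ as a sum of martingale differences. Each difference is localized to the $j$-th column through a rank-one (Sherman--Morrison) identity applied to the companion matrix $\mb{T}^{1/2}\mb{X}\mb{X}^{\dag}\mb{T}^{1/2}/N$, whose Stieltjes transform is related to $\wt{\mc{G}_{\mb{B}}}$ by an explicit affine relation; this expresses each difference through quadratic forms $\mb{x}_j^{\dag}\mb{T}^{1/2}(\mb{B}_{(j)}-z)^{-1}\mb{T}^{1/2}\mb{x}_j$ built from the deflated resolvent. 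Using the concentration of these quadratic forms about their normalized trace, I would verify the Lindeberg condition for the martingale CLT (negligibility of the summed fourth conditional moments) and evaluate the limiting conditional variance; summing the conditional covariances produces exactly the closed kernel~(\ref{eqGB}). The assumption $\mbb{E}[|X_{i,j}|^4]=2$ enters here, matching the complex Gaussian fourth moment so that no extra variance term survives.

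For the mean correction $M_n^{d}$ I would expand $\mbb{E}\,\wt{\mc{G}_{\mb{B}}}(z)$ to order $O(1)$ using the same resolvent identities; combined with the self-consistent equation defining $\mc{G}_{\mb{B}}(z)$, and again invoking $\mbb{E}[|X_{i,j}|^4]=2$ to cancel the non-universal contribution, this yields $\mbb{E}[M_n(z)]\to 0$. Tightness (ii) would be obtained by bounding $\mbb{E}\,|M_n(z_1)-M_n(z_2)|^{2}$ by a constant multiple of $|z_1-z_2|^2$ uniformly on $\mc{C}_z$, which follows from Lipschitz estimates on the resolvent together with uniform moment bounds valid because $\mc{C}_z$ stays bounded away from the (almost surely eventually enclosed) spectrum of $\mb{B}$; a standard moment criterion then upgrades finite-dimensional convergence to weak convergence of the full analytic process.

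The main obstacle is the martingale-difference analysis of $M_n^{c}$: establishing the sharp concentration estimates for the quadratic forms and individual resolvent entries, uniformly in $z$ along the contour and in the column index $j$, so that both the Lindeberg condition holds and the conditional variances converge to the explicit kernel~(\ref{eqGB}). Extracting the mean correction $M_n^{d}$ at the exact $O(1)$ level, where the fourth-moment condition must be used to annihilate the non-universal bias, is the other delicate point.
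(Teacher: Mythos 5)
The paper does not prove this lemma at all: it is imported verbatim from \cite{Bai2004} and invoked as a black box in the proof of Proposition~\ref{propCLT}, so there is no in-paper argument to compare against, and your outline is instead a faithful high-level reconstruction of Bai and Silverstein's own proof --- column-wise martingale decomposition of the centred resolvent, rank-one deflation with quadratic-form concentration, the fourth-moment condition $\mbb{E}[|X_{i,j}|^4]=2$ killing both the non-universal variance term and the $O(1)$ bias so that the kernel (\ref{eqGB}) and $\mbb{E}[M_n(z)]=0$ emerge, and tightness via the second-moment Lipschitz bound $\mbb{E}|M_n(z_1)-M_n(z_2)|^2\le C|z_1-z_2|^2$. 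Your route is therefore exactly the one the paper implicitly relies on; the only caveat is that it remains a roadmap rather than a proof, since the genuinely hard uniform estimates --- in particular the truncation argument needed where the contour $\mc{C}_z$ approaches the real axis, and the uniformity in $z$ and in the column index of the concentration bounds --- are named but not carried out.
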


Conditioned on $\mb{P}$, it is straightforward to verify that the complex Gaussian matrix $\mb{\Psi}$ fulfills the assumption 1). The ESD $\widetilde{F_{\mb{P}}}(\cdot)$ converges to the Mar\v{c}enko-Pastur distribution and therefore fulfills the assumption 2). Furthermore, $\underline{\mc{G}_{\mb{Q}}}$ is, by definition, the average of $\widetilde{\mc{G}_{\mb{Q}}}$. It thus follows from Lemma~\ref{lemmaCovBai} that $G_R^1(z)$ given $\mb{P}$ converges to a Gaussian process on the complex plane with $\mbb{E}[G_R^1(z)]=0$ and
\begin{equation}
\mathrm{Cov}\left(G_R^1(z_1),G_R^1(z_2)\right) = \frac{\mc{G}_{\mb{Q}}'(z_1)\mc{G}_{\mb{Q}}'(z_2)}{(\mc{G}_{\mb{Q}}(z_1)-\mc{G}_{\mb{Q}}(z_2))^2} - \frac{1}{(z_1-z_2)^2}.\label{eqCovGR1}
\end{equation}

By (\ref{eqGQT}), we have
\begin{equation}
z = \frac{1}{\underline{\mc{G}_{\mb{Q}}}} + \rho\int\left(\frac{\lambda\mathrm{d}\widetilde{F_{\mb{P}}}(\lambda)}{1-\lambda \underline{\mc{G}_{\mb{Q}}}}-\frac{\lambda\mathrm{d}\widetilde{F_{\mb{P}}}(\lambda)}{1-\lambda \mc{G}_{\mb{Q}}}\right) + \rho\int\frac{\lambda\mathrm{d}\widetilde{F_{\mb{P}}}(\lambda)}{1-\lambda \mc{G}_{\mb{Q}}}.\label{eqGQTb}
\end{equation}
Subtracting (\ref{eqGQ}) from (\ref{eqGQTb}) yields
\begin{align}
0 &= \frac{\underline{\mc{G}_{\mb{Q}}}-\mc{G}_{\mb{Q}}}{\underline{\mc{G}_{\mb{Q}}}\mc{G}_{\mb{Q}}} - \rho\int\frac{\left(\underline{\mc{G}_{\mb{Q}}}-\mc{G}_{\mb{Q}}\right)\lambda^2\mathrm{d}\widetilde{F_{\mb{P}}}(\lambda)}{(1-\lambda\underline{\mc{G}_{\mb{Q}}})(1-\lambda\mc{G}_{\mb{Q}})}-\rho\int\left(\frac{\lambda\mathrm{d}\widetilde{F_{\mb{P}}}(\lambda)}{1-\lambda \mc{G}_{\mb{Q}}}-\frac{\lambda\mathrm{d}F_{\mb{P}}(\lambda)}{1-\lambda \mc{G}_{\mb{Q}}}\right)\nonumber\\
\underline{\mc{G}_{\mb{Q}}}-\mc{G}_{\mb{Q}} &= \frac{\rho\ \underline{\mc{G}_{\mb{Q}}}\mc{G}_{\mb{Q}}}{C}\int\left(\frac{\lambda\mathrm{d}\widetilde{F_{\mb{P}}}(\lambda)}{1-\lambda \mc{G}_{\mb{Q}}}-\frac{\lambda\mathrm{d}F_{\mb{P}}(\lambda)}{1-\lambda \mc{G}_{\mb{Q}}}\right),\label{eqGQdiff}
\end{align}
where $C = 1- \rho\ \underline{\mc{G}_{\mb{Q}}}\mc{G}_{\mb{Q}}\int\frac{\lambda^2\mathrm{d}\widetilde{F_{\mb{P}}}(\lambda)}{\left(1-\lambda\underline{\mc{G}_{\mb{Q}}}\right)\left(1-\lambda\mc{G}_{\mb{Q}}\right)}$.
By definition of Cauchy transform, we have
\begin{equation}
\int\frac{\lambda\mathrm{d}F_{\mb{P}}(\lambda)}{1-\lambda \mc{G}_{\mb{Q}}} = -\frac{1}{\mc{G}_{\mb{Q}}} + \frac{1}{\mc{G}_{\mb{Q}}^2}\mc{G}_{\mb{P}}\left(\frac{1}{\mc{G}_{\mb{Q}}}\right),\quad\mathrm{and}\quad \int\frac{\lambda\mathrm{d}\widetilde{F_{\mb{P}}}(\lambda)}{1-\lambda \mc{G}_{\mb{Q}}} = -\frac{1}{\mc{G}_{\mb{Q}}} + \frac{1}{\mc{G}_{\mb{Q}}^2}\widetilde{\mc{G}_{\mb{P}}}\left(\frac{1}{\mc{G}_{\mb{Q}}}\right).\label{eqGPQ}
\end{equation}
By inserting (\ref{eqGPQ}) into (\ref{eqGQdiff}) and multiplying $R$ on both sides of (\ref{eqGQdiff}), we obtain $G_R^2 = S\Big(\widetilde{\mc{G}_{\mb{P}}}(1/\mc{G}_{\mb{Q}})-\mc{G}_{\mb{P}}(1/\mc{G}_{\mb{Q}})\Big)/C$.
In the asymptotic regime (\ref{eqDim}), $\underline{\mc{G}_\mb{Q}}$ converges to $\mc{G}_{\mb{Q}}$ and $C$ converges to $1-\rho \mc{G}_{\mb{Q}}^2\int\frac{\lambda^2\mathrm{d}F_{\mb{P}}(\lambda)}{(1-\lambda\mc{G}_{\mb{Q}})^2}$. It follows from Lemma~\ref{lemmaCovBai} (with an $S\times S$ matrix $\mb{P} = \mb{X}^{\dag}\mb{X}$ and $\mb{T}$ being an identity matrix) that $G_R^2$ converges to a centered Gaussian process. Note that the covariance (\ref{eqCovGR1}) of $G_R^1(z)$ is independent of $\mb{P}$ and the randomness of $G_R^2(z)$ only comes from $\mb{P}$, which makes $G_R^1(z)$ and $G_R^2(z)$ independent of each other. Combining the above arguments, $G_R(z) = G_R^1(z) + G_R^2(z)$ is asymptotically a sum of two independent Gaussian processes and therefore $G_R(z)$ is also a Gaussian process.

\end{appendices}

\ifCLASSOPTIONcaptionsoff
  \newpage
\fi

\end{document}